\providecommand{\U}[1]{\protect\rule{.1in}{.1in}}
\newtheorem{theorem}{Theorem}
\newenvironment{proof}[1][Proof]{\noindent\textbf{#1.} }{\ \rule{0.5em}{0.5em}}
\begin{document}
\preprint{ }
\preprint{UATP/1904}
\title{A First-Principles Nonequilibrium Deterministic Equation of Motion of a
Brownian Particle and Microscopic Viscous Drag }
\author{P.D. Gujrati,$^{1,2}$ }
\affiliation{$^{1}$Department of Physics, $^{2}$Department of Polymer Science, The
University of Akron, Akron, OH 44325}
\email{pdg@uakron.edu}

\begin{abstract}
We present a \emph{first-principles thermodynamic approach} to provide an
alternative to the Langevin equation by identifying the \emph{deterministic}
(no stochastic component) microforce $\mathbf{F}_{k,\text{BP}}$ acting on a
nonequilibrium Brownian particle (BP) in its $k$th microstate $\mathfrak{m}%
_{k}$. (The prefix micro refers to microstate quantities and carry a suffix
$k$.) The deterministic new equation is easier to solve using basic calculus.
Being oblivious to the second law, $\mathbf{F}_{k,\text{BP}}$ does \emph{not}
always oppose motion but viscous dissipation emerges upon ensemble averaging.
The equipartition theorem is always satisfied. We reproduce well-known results
of the BP in equilibrium. We explain how the microforce is obtained directly
from the mutual potential energy of interaction beween the BP and the medium
after we average it over the medium so we only have to consider the particles
in the BP. Our approach goes beyond the phenomenological and equilibrium
approach of Langevin and unifies nonequilibrium viscous dissipation from
mesoscopic to macroscopic scales and provides new insight into Brownian motion
beyond Langevin's and Einstein's formulation.

\end{abstract}
\keywords{Nonequilibrium Brownian particle; viscous dissipation; Langevin equation;
microstates; internal microwork; random variables and fluctuations; white
Gaussian noise; irreversibility, internal equilibrium states.}\date{\today}
\maketitle

\section{Introduction}

The aim in this study is to introduce a nonequilibrium (NEQ) thermodynamics
based exclusively on microstates, which will be called the $\mu$NEQT in short
($\mu$ for micro-), and apply it to describe viscous dissipation associated
with the dynamics of a Brownian particle (BP) as it undergoes a
\emph{macroscopic} relative motion with respect to the rest of the system
$\Sigma$. The system is in a medium $\widetilde{\Sigma}$; see Fig.
\ref{Fig_System}. Due to the above motion, $\Sigma$ is not in equilibrium (EQ)
\cite{Landau,Note}; however, $\widetilde{\Sigma}$ is always assumed to be in
EQ. The $\mu$NEQT will be an extension of the traditional macroscopic NEQ
thermodynamics (MNEQT, M for macro-)
\cite{deGroot,Prigogine,Coleman,Maugin,Gujrati-I,Gujrati-II,Gujrati-III} to
the microstate level.

At the simplest level, BP's diffusion and dynamics in EQ are described using
Einstein's and Langevin's approaches, respectively
\cite{Einstein-BrownianMotion,Langevin,Chandrasekhar,Sekimoto-Book}. The study
is motivated by the fact that the dynamics of a BP has received a resurgence
of interest mainly due to the current interest in nonequilibrium (NEQ)
processes observed at the microstate scale such as by micron- or smaller-sized
\emph{active }BPs often encountered in biological or man-made systems
\cite{Marconi,Romanczuk,Kapral-2017,Fodor}, and in inhomogeneous systems
\cite{Beck}. These processes are strongly influenced by NEQ fluctuations that
may be very different from their equilibrium counterpart.
%TCIMACRO{\FRAME{ftbpFU}{3.2145in}{2.6238in}{0pt}{\Qcb{An isolated system
%$\Sigma_{0}$ consisting of the system $\Sigma$ in a surrounding medium
%$\widetilde{\Sigma}$. The BP, which is not shown here, is embedded within
%$\Sigma$ as shown in Fig. \ref{Fig_Piston-Spring}. The medium and the system
%are characterized by their fields $T_{0},P_{0},...$ and $T(t),P(t),...$,
%respectively, which are different when the two are out of equilibrium.
%Exchange quantities ($d_{\text{e}}X$) carry a suffix "e" and irreversibly
%generated quantities ($d_{\text{i}}X$) within the system by a suffix "i" by
%extending the Prigogine notation. Their sum $d_{\text{e}}X+d_{\text{i}}X$ is
%denoted by $dX$, which is a system-intrinsic quantity (see text). }%
%}{\Qlb{Fig_System}}{}%
%{\special{ language "Scientific Word";  type "GRAPHIC";
%maintain-aspect-ratio TRUE;  display "USEDEF";  valid_file "F";
%width 3.2145in;  height 2.6238in;  depth 0pt;  original-width 4.9018in;
%original-height 3.9946in;  cropleft "0";  croptop "1";  cropright "1";
%cropbottom "0";
%filename 'Interacting-System-Exchange0.eps';file-properties "XNPEU";}} }%
%BeginExpansion
\begin{figure}
[ptb]
\begin{center}
\includegraphics[
height=2.6238in,
width=3.2145in
]%
{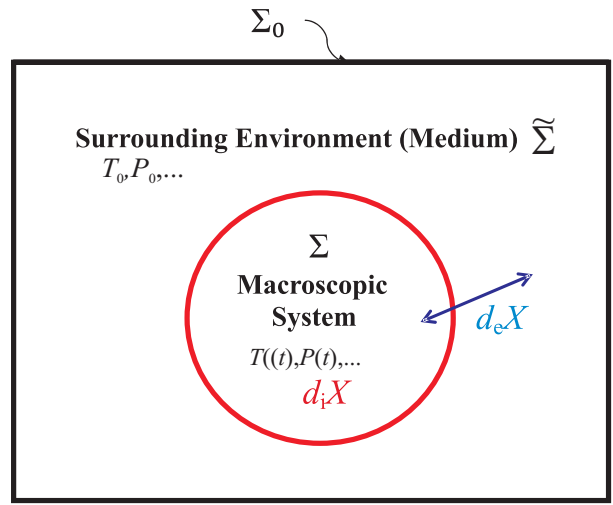}%
\caption{An isolated system $\Sigma_{0}$ consisting of the system $\Sigma$ in
a surrounding medium $\widetilde{\Sigma}$. The BP, which is not shown here, is
embedded within $\Sigma$ as shown in Fig. \ref{Fig_Piston-Spring}. The medium
and the system are characterized by their fields $T_{0},P_{0},...$ and
$T(t),P(t),...$, respectively, which are different when the two are out of
equilibrium. Exchange quantities ($d_{\text{e}}X$) carry a suffix "e" and
irreversibly generated quantities ($d_{\text{i}}X$) within the system by a
suffix "i" by extending the Prigogine notation. Their sum $d_{\text{e}%
}X+d_{\text{i}}X$ is denoted by $dX$, which is a system-intrinsic quantity
(see text). }%
\label{Fig_System}%
\end{center}
\end{figure}
%EndExpansion

Spontaneous fluctuations close to EQ are Gaussian \cite{Landau} as in the two
approaches above, but \emph{non-Gaussianity }\cite{Granick,Zheng} seems to be
a signature of NEQ states and abounds in Nature when the system is far from
equilibrium. In this case, the above two approaches must fail and we need to
develop new approaches to study NEQ viscous drag. Several attempts have been
made to obtain generalized Langevin equations for the microstate
$\mathfrak{m}_{k}$; see for example Ref. \cite{Evans}. Despite significant
attempts to understand Brownian dynamics in a passive or active medium under
external driving
\cite{Mizuno,Hanggi,Huang,Sekimoto-Book,Marconi,Kapral-2017,Gaspard} resulting
in NEQ conditions, we still lack its comprehensive thermodynamic
understanding, gaining which should then allow us to have a systematic
enlargement of the NEQ state space $\mathfrak{S}$ (see below) and expansion to
higher order than just two in fluctuations. It was Einstein
\cite{Einstein-BrownianMotion} who had first initiated a successful
\emph{thermodynamic} approach for a BP in EQ. This should be contrasted with
the mechanical \emph{stochastic} approach of Langevin \cite{Langevin}. We will
adopt a \emph{hybrid} approach in this work in which we begin with a NEQ
thermodynamic from which we derive a mechanical equation of motion. Being
associated with microstates, the $\mu$NEQT will allow us to capture the
thermodynamics of fluctuations and viscous dissipation experienced by a BP
under any condition using the state space $\mathfrak{S}$. Within the framework
of this theory, the behavior of the system will dictate whether fluctuations
are Gaussian or not or whether viscous drag follows Stokes' law or a more
complex behavior. Moreover, while most of us are familiar with classical
MNEQT, not many are trained in the technical issues of the Wiener process
(such as the It\^{o} and Stratonovich\ integrals) necessary to follow
Langevin's \emph{stochastic} approach. In our approach, we will only be
dealing with a \emph{deterministic} equation of motion. This should make our
approach quite useful.

Einstein assumed that a BP can be simply described by its stochastic
\emph{center of mass }(CM)\emph{ }position\emph{ }$\mathbf{r}_{k}$\emph{\ }for
its specification and by ignoring the center of mass momentum $\mathbf{p}_{k}%
$, and the specification of its constituent atoms or molecules that identify
the BP as a thermodynamic \emph{object}. The interface between the BP and the
system causes osmotic pressure that drives the diffusion of its CM. The EQ
diffusion of the BP obeys a diffusion equation, the Fokker--Planck equation
describing stochasticity in terms of conditional probabilities in the ensemble
picture \cite{Keizer}, which Einstein solved. Langevin \cite{Langevin} later
provided a \emph{stochastic} formulation of the same motion by applying
Newton's equation
\begin{equation}
Md^{2}\mathbf{r}_{k}/dt^{2}=\mathbf{F}_{k,\text{BP}}^{\prime}%
(t)\label{LangevinEquation0}%
\end{equation}
to the BP of mass $M$ in each \emph{microstate }$\mathfrak{m}_{k}$ specified
by a small cell around $(\mathbf{r}_{k}(t),\mathbf{v}_{k}(t))$, by dividing
the stochastic force $\mathbf{F}_{k,\text{BP}}^{\prime}$ into a
\emph{deterministic} (no randomness) force component $\mathbf{F}_{k,\text{f}%
}(t)\doteq-\gamma\mathbf{v}_{k}(t)=-\gamma d\mathbf{r}_{k}(t)/dt$ determined
by the microstate $k$ (with $\gamma>0$), and a \emph{stochastic} Gaussian
white force component $\boldsymbol{\xi}(t)$ \cite{Mazur}; see
\cite{Chandrasekhar} for an elegant discussion and inherent assumptions. Both
$\gamma$ and the Langevin force $\boldsymbol{\xi}(t)$ are \emph{independent}
of the position and velocity of the BP in $\mathfrak{m}_{k}$ so the two forces
are independent despite arising from the interaction of the BP with its
surroundings. Chandrasekhar \cite{Chandrasekhar} emphasizes $\xi(t)$ as a
\emph{characteristic} of a BP, which undergoes \emph{rapid} fluctuations over
an interval $\Delta t$ over which $\mathbf{v}_{k}(t)$ only undergoes a small
variation. Implicit in the above formulation is that (i) $\mathbf{F}%
_{k,\text{f}}(t)$ opposes motion in\emph{ every }$\mathfrak{m}_{k}$\emph{
}($\gamma>0$)\emph{ }as if it is a \emph{macroscopic}, \textit{i.e., }a
thermodynamic average force satisfying the second law, (ii) the Langevin force
$\boldsymbol{\xi}(t)$ performs no average work, (iii) $\boldsymbol{\xi}(t)$
represents a rapidly fluctuating force (fast-force) and $\mathbf{F}%
_{k,\text{f}}(t)$ a slowly varying force (slow-force) due to widely separated
time scales, and (iv) the separation between the two distinct time scales
requires two distinct averages involving a joint probability distribution of
initial microstates $(\mathbf{r}_{k}(0),\mathbf{v}_{k}(0))$ and
$\boldsymbol{\xi}(t)$; the latter requires its conditional probability
distribution corresponding to a Wiener process \cite{Keizer}. The
\emph{separation} between $\mathbf{F}_{k,\text{f}}(t)$\ and $\boldsymbol{\xi
}(t)$ is one of the basic assumptions as discussed by Chandrasekhar
\cite{Chandrasekhar}; see also Mazur and Bedeaux \cite{Mazur}, and Pomeau and
Piasecki \cite{Pomeau}. The above four assumptions are taken to be valid in
any theory of a BP in which a clear separation between fast and slow
components of the force are made. For brevity, we will call all of them as
following the \emph{Langevin approach}, which also includes the modern theory
of stochastic processes \cite{Keizer} and the Mori-Zwanzig approach
\cite{Evans-book,Zwanzig}.

The distinct approaches\ by Einstein and Langevin have developed into
mathematically distinct but physically equivalent ways to investigate
stochastic processes \cite{Keizer}. The approach by Einstein adopts a
probabilistic approach to capture thermodynamic stochasticity and results in
\emph{ensemble},\emph{ }\textit{i.e.,}\emph{ thermodynamic averages} such as
the root-mean-square displacement but dynamics is not a central issue. In
contrast, Langevin's approach starts with the dynamical equation in which
$\mathbf{F}_{k,\text{f}}(t)$ is related to the instantaneous velocity
$\mathbf{v}_{k}(t)$. The stochasticity due to $\boldsymbol{\xi}(t)$ defines a
stationary process because the probability distribution does not change in
time \cite{Chandrasekhar}, a well-known property of white noise so that
averaging Eq. (\ref{LangevinEquation0}) over $\boldsymbol{\xi}(t)$ alone
results in a deterministic equation.

All quantities associated with $\mathfrak{m}_{k}$ are called
\emph{microquantities} as opposed to their ensemble averages, which we call
\emph{macroquantities} or simply quantity in this work. All microquantities
will always carry a suffix $k$
\cite{Gujrati-GeneralizedWork,Gujrati-GeneralizedWork-Expanded}.

The Langevin equation in one dimension is
\begin{equation}
Mdv_{k}(t)/dt=F_{k,\text{BP}}^{\prime}(t)=-\gamma v_{k}(t)+\xi
(t),\label{LangevinEquation}%
\end{equation}
with $F_{k,\text{f}}(t)\doteq-\gamma v_{k}(t)$. The sets $\{v_{k}(t)\}$ and
$\{F_{k,\text{BP}}^{\prime}(t)\}$ form the set of outcomes of random variables
\textsf{v} and \textsf{F}$^{^{\prime}}$, respectively, over $\{\mathfrak{m}%
_{k}\}$.

Stokes' law for a spherical BP of radius $a$ gives $\gamma=6\pi a\eta>0$,
where $\eta$ is the viscosity of the surrounding fluid; see Ref.
\cite{Kapral0} for a microscopic derivation.

It is well known that the Langevin force $\xi(t)$ is central to satisfy the
equipartition theorem $\left\langle v^{2}(t)\right\rangle =T_{0}/M$, where
$\left\langle \bullet\right\rangle $ refers to the ensemble average over all
microstates and $T_{0}$ is the temperature of the medium (heat bath)
\cite{Langevin}. Therefore, one needs to perform two \emph{distinct} and
\emph{independent} averages over initial velocities $\left\{  v_{0}\right\}  $
and positions $\left\{  x_{0}\right\}  $, and $\xi(t)$ at each time; see for
example Reichl \cite{Reichl} for a clear discussion.\ On the other hand, the
equipartition theorem is always fulfilled in the Einstein approach
\cite{Einstein-BrownianMotion} without any $\xi$. This suggests that our
hybrid approach based on a statistical formulation ($\mu$NEQT) \`{a} la
Einstein, from which equations of motion \`{a} la Langevin can also be
derived, will offer a possible route to study all possible kinds of BPs since
it will contain all the information necessary to incorporate the
\emph{correct} (Gaussian or non-Gaussian) nature of fluctuations appropriate
for the system. No $\xi(t)$ is required. Thus, the $\mu$NEQT offers a
framework to study all of them within a unified first-principles approach in
which only the ensemble average $\left\langle \bullet\right\rangle $ is required.

Our approach using the $\mu$NEQT is very different from the above two
approaches. As various $\mathfrak{m}_{k}$'s are defined by the Hamiltonian
$\mathcal{H}$, we identify the microforce $\mathbf{F}_{k,\text{BP}}(t)$ on
$\mathfrak{m}_{k}$ as the \emph{mechanical} force determined by the microstate
energy (microenergy) $E_{k}$ obtained directly from $\mathcal{H}$. This
microforce refers to the system $\Sigma$ and not to the BP, unless the BP
happens to be the system as will be the case in Sec. \ref{Sec-Example}. As
$\mathcal{H}$ itself is deterministic, $\mathbf{F}_{k,\text{BP}}(t)$ is
\emph{deterministic}, which immediately distinguishes it from the stochastic
force $\mathbf{F}_{k,\text{BP}}^{\prime}(t)$ used by Langevin. In addition,
$\ \mathbf{F}_{k,\text{BP}}(t)$\emph{ }is\emph{ }not partitioned\emph{
}into\emph{ slow }and\emph{ fast }components as is required in the Langevin
approach.\emph{ }Newton's equation with the deterministic $\mathbf{F}%
_{k,\text{BP}}(t)$ is different from Eq. (\ref{LangevinEquation0}) and much
simpler to solve as we do not need to deal with stochastic integrals
\cite{Keizer}. This makes solving the equation of motion straight forward
using basic calculus. There is no requirement that $\mathbf{F}_{k,\text{BP}%
}(t)$ oppose the motion in $\mathfrak{m}_{k}$\ as the second law is applicable
to macrostates and not to microstates. Thus, it is distinct from the slow
component $\mathbf{F}_{k,\text{f}}(t)$\ above. As is normal, $\mathbf{F}%
_{k,\text{BP}}(t)$ fluctuates over $\left\{  \mathfrak{m}_{k}\right\}  $. Its
ensemble average $\mathbf{F}_{\text{BP}}(t)$ satisfies the second law and
opposes the motion, whereas $\mathbf{F}_{k,\text{f}}(t)$\ satisfies the law
($\gamma>0$) for each microstate. We calculate various fluctuations over the
statistical ensemble in the $\mu$NEQT and reproduce all known results. The
probabilities $\left\{  p_{k}(t)\right\}  $ are determined uniquely in the
$\mu$NEQT as we will see. We only focus on $\left\{  \mathbf{F}_{k,\text{BP}%
}(t)\right\}  $ and the consequences here. We describe in detail the
computational scheme to show the feasibility and the usefulness of our approach.

The layout of the paper is as follows. We introduce the new thermodynamics in
the next section with a focus on the BP problem and give a very general form
of viscous dissipation that follows from the second law. In Sec.
\ref{Sec-InternalContribution}, we discuss in depth the microforce that
results in the viscous dissipation, the resulting new microstate equation of
motion, and calculate various thermodynamic fluctuations. Sec.
\ref{Sec-Example} deals with the feasibility of the new approach for the
simple case of a BP in a medium. We consider here the mutual interaction
between the BP and the medium and average it over the macrostate of the
medium. The resulting potential depends only on the BP-microstate and
determines $\mathbf{F}_{k,\text{BP}}(t)$. Thus, we only need to pay attention
to the particles in the BP, which simplifies the calculation. The final
section deals with discussion and conclusions.

\section{A New Approach using the $\mu$NEQT\label{Sec-GeneralCase}}

We find it very useful to follow the extension of the Prigogine's notation in
this study \cite{Note-Notation}; see also Fig. \ref{Fig_System} caption.

\subsection{The Concept of Internal Equilibrium \label{Sec-IEQ}}

The central concept of the NEQT exploited here is that of the \emph{internal
equilibrium }(IEQ) according to which the entropy $S$ of a NEQ macrostate is a
\emph{state function} of the state variables in the enlarged state space
$\mathfrak{S}$ \cite{Gujrati-I,Gujrati-II,Gujrati-III}; see Sec.
\ref{Sec-MicroTH} for details. The enlargement of the space relative to the EQ
state space $\mathfrak{S}_{0}$ is due to independent internal variables
\cite{Coleman,deGroot,Prigogine,Maugin,Gujrati-I,Gujrati-II} that are required
to describe a NEQ macrostate as we explain below. In EQ, the internal
variables are no longer independent of the observables forming the space
$\mathfrak{S}_{0}$. As a consequence, their affinities vanish in EQ.
Observables are quantities that can be controlled from the outside but not the
internal variables. In general, the temperature $T$ of the system in IEQ is
identified in the standard manner by the relation
\begin{equation}
1/T=\partial S/\partial E \label{IEQ-Temp}%
\end{equation}
using the fact that $S$ is state variable in $\mathfrak{S}$.

An important property of IEQ macrostates is the following that will prove very
useful here: It is possible in an IEQ macrostate to have different degrees of
freedom or different parts of a system to have different temperatures than
$T$. For example, in a glass, it is well known that the vibrational degrees of
freedom have a different temperature than the configurational degrees of
freedom \cite{Debenedetti,Gujrati-Hierarchy}. In the viscous drag problem, the
CM-motion of the BP can be separated out from the motion of its various
constituent particles as is well known; see Sec. \ref{Sec-Example}. Then, it
is possible for the BP motion to have a different temperature than $T$
introduced above. This observation is easily verified in MNEQT based on the
concept of IEQ as done elsewhere \cite[see Sec. 8.1 and Eq. (58)]%
{Gujrati-Hierarchy}. The derivation also works when various parts of the
system have different temperatures. As this observation will play an important
role in this investigation, we rederive it for clarity in a different manner,
which supplements the previous demonstration \cite{Gujrati-Hierarchy} and also
shows how an internal variable is required to describe an IEQ macrostate.

\subsubsection{An Example}

Consider the case of two identical bodies $\Sigma_{1}$ and $\Sigma_{2}$ in
thermal contact at different temperatures $T_{1}(t)$ and $T_{2}(t)$ and
energies $E_{1}(t)$ and $E_{2}(t)$, respectively; we ignore other observables
$N,V$, etc. We assume that each one is in an EQ state of its own at each
instant. Together, they form an isolated system $\Sigma$, whose entropy
$S(E_{1},E_{2})=S_{1}(E_{1})+S_{2}(E_{2})$ is a function of two variables at
each instant $t$, and can be written as a state function in the enlarged state
space formed by $E=E_{1}+E_{2}=const$ (the observable) and $\xi(t)=E_{1}%
-E_{2}$ (the internal variable). (We have neglected the interaction energy
$E_{12}$\ between $\Sigma_{1}$ and $\Sigma_{2}$ here.) For this IEQ state, it
is trivial to show that the temperature is $T(t)=2T_{1}T_{2}/(T_{1}+T_{2})$
and the affinity $T\partial S/\partial\xi$ is $A(t)=(T_{1}-T_{2})/(T_{1}%
+T_{2})$. At equilibrium, $T_{1}=T_{2}=T_{\text{eq}}$ and $\xi=0,A=0$. Thus,
$T_{1}$ and $T_{2}$ may be very different, yet the system can be treated in
IEQ, any temperature difference between its parts not withstanding. The
discussion can be extended easily to the case the two bodies are in IEQs and
also when they are of different sizes.

\subsubsection{Microstates}

We consider the phase space $\Gamma$ associated with $\Sigma$ and partition it
completely into countable nonoverlapping cells $\left\{  \delta\mathbf{z}%
_{k}\right\}  ,k=1,2,\ldots$, each of size $h^{3N}$,\ around the phase point
$\mathbf{z\in\Gamma}$; here $N$ is the number of particles in $\Sigma$ and we
assume that the volume $\left\vert \Gamma\right\vert $ of $\Gamma$ has been
divided by $N!$ to account for the permutation symmetry of the $N$ particles.
We use the cells to identify the set of microstate $\left\{  m_{k}\right\}  $
of $\Sigma$. Consider $\Sigma$ to be composed of two distinct bodies
$\Sigma_{1}$ and $\Sigma_{2}$, as above in the example. As each cell
$\delta\mathbf{z}_{k}$ is a union of cells $\delta\mathbf{z}_{k_{1}}^{(1)}$
and $\delta\mathbf{z}_{k_{2}}^{(2)}$\ corresponding to $\Sigma_{1}$ and
$\Sigma_{2}$, we can relate the microstate energies as follows:
\begin{subequations}
\begin{equation}
E_{k}=E_{k_{1}}+E_{k_{2}}+E_{k,12} \label{MicroEnergy-Relation}%
\end{equation}
where we have also included the interaction energy $E_{k,12}$, which is
usually neglected as we did above. These energies are independent of the
macrostates and, therefore, independent of quantities such as the temperatures
that specify macrostates of various bodies forming the system. The energies
corresponding to their macrostates are related by%
\begin{equation}
E=E_{_{1}}+E_{_{2}}+E_{12}. \label{MacroEnergy-Relation}%
\end{equation}

The interaction energy $E_{k,12}$ and its macroaverage $E_{12}$, however, will
play an important role later in Sec. \ref{Sec-Example}, where we deal with
relative motion between $\Sigma$ (the BP) and $\widetilde{\Sigma}$; the
existence of this motion is central for viscous dissipation as we will see in
Sec. \ref{Sec-MicroTH}.

\subsection{Ensemble Stochasticity and the Second Law}

We consider a system $\Sigma$, see Fig. \ref{Fig_System}, that contains a
single BP that is shown explicitly in Fig. \ref{Fig_Piston-Spring} as part of
$\Sigma$. The single BP is our focus in this work. We follow the standard
formulation for a statistical system $\Sigma$ \cite{Landau}, which interacts
weakly with a much larger medium $\widetilde{\Sigma}$ so this interaction
$U_{\text{int}}$ is normally ignored. This is possible as we do not allow any
relative motion between $\Sigma$ and $\widetilde{\Sigma}$ as noted above; see
also Sec. \ref{Sec-MicroTH}. However, $U_{\text{int}}$ must not be zero
identically otherwise there cannot be any energy (heat and work) exchange
between $\Sigma$ and $\widetilde{\Sigma}$. Together, they form an isolated
system $\Sigma_{0}\doteq\Sigma\cup\widetilde{\Sigma}$. The system may be far
away from equilibrium so the new theory is more general than the EQ treatments
by Einstein and Langevin.

As said above, treating NEQ states normally requires some (extensive) internal
variables that are generated due to \emph{internal processes}
\cite{Coleman,deGroot,Prigogine,Maugin,Gujrati-I,Gujrati-II}. Their conjugate
fields, called \emph{affinity}, vanish only in equilibrium. The system is
specified by a Hamiltonian $\mathcal{H}(\left.  \mathbf{z}\right\vert
\mathbf{Z})$ in which $\mathbf{z}$ denotes a phase point in its phase space
and $\mathbf{Z}\doteq\left\{  Z\right\}  $ denotes the set of parameters such
as the volume $V$, the number of particles $N$ which we do not show, etc. and
internal variables.

The time dependence in some or all components in $\mathbf{Z}$ gives rise to
time dependence in the Hamiltonian $\mathcal{H}(\left.  \mathbf{z}\right\vert
\mathbf{Z})$; the dynamical variable $\mathbf{z}$ plays no role as we show in
Eqs. (\ref{HamiltonianChange}) and (\ref{HamiltonianChange-Work}). From
$\mathcal{H}(\left.  \mathbf{z}\right\vert \mathbf{Z})$,\ we identify
microstates $\mathfrak{m}_{k}(\mathbf{Z})$ and their microenergies
$E_{k}(\mathbf{Z})$; we will usually suppress the $\mathbf{Z}$-dependence
unless necessary for clarity. The microstate $\mathfrak{m}_{k}$ appears with
probability $p_{k}$\ in the statistical ensemble. The set $\left\{
p_{k}\right\}  $ determines the stochasticity in the ensemble. Accordingly, it
determines the nature of the macrostate (EQ vs NEQ) but the sets $\left\{
E_{k}\right\}  $ and $\left\{  \mathfrak{m}_{k}\right\}  $ are independent of
$\left\{  p_{k}\right\}  $ so they are deterministic.

In the $\mu$NEQT, the two aspects can be separated out in an unambiguous
fashion so we can uniquely determine the deterministic quantities such as
$\left\{  F_{k,\text{BP}}\right\}  $. Accordingly, we do not need to partition
microforces into \textquotedblleft slow\textquotedblright\ and
\textquotedblleft fast\textquotedblright\ components. There is no random force
in our approach so we avoid the complications of the conventional Wiener
process in the Langevin approach. Clearly, the deterministic microforces are
oblivious to the stochastic nature of the thermodynamic system. The second law
emerges automatically after averaging, but not without it. Thus, the $\mu$NEQT
as an extension of the MNEQT will be based solely on the sets $\left\{
E_{k}\right\}  $ and $\left\{  p_{k}\right\}  $ so it provides a
\emph{first-principles} deterministic theory from which the MNEQT is trivially reconstructed.

To investigate the ensemble, it is useful to treat a microquantity that takes
values $\left\{  q_{k}\right\}  $ over $\left\{  \mathfrak{m}_{k}\right\}  $
at each instant as a \emph{random variable} $\mathsf{q}$ defined over
$\left\{  \mathfrak{m}_{k}\right\}  $. Thus, $\left\{  E_{k}\right\}  $ and
$\left\{  F_{k,\text{BP}}\right\}  $ refer to the outcomes of the random
variables \textsf{E} and \textsf{F}$_{\text{BP}}$, respectively. In this
study, we use sans serif typeface to denote random variables to distinguish
them from their outcomes. For a given $\left\{  p_{k}\right\}  $, \textsf{q}
is characterized by its ensemble average $\left\langle \mathsf{q}\right\rangle
$ and various moments such as the variance $\left\langle (\Delta
\mathsf{q})^{2}\right\rangle $ in terms of the fluctuation $\Delta
\mathsf{q\doteq q-}\left\langle \mathsf{q}\right\rangle $. As $p_{k}$'s
continue to change in a NEQ\ state, $\left\langle \mathsf{q}\right\rangle $
and $\left\langle (\Delta\mathsf{q})^{2}\right\rangle $\ also change. In the
$\mu$NEQT, the macroforce $\mathbf{F}_{\text{BP}}\doteq\left\langle
\text{\textsf{F}}_{\text{BP}}\right\rangle $ corresponding to $\left\{
\mathbf{F}_{k,\text{BP}}\right\}  $ must oppose the motion in accordance with
the second law as does $\mathbf{F}_{k,\text{f}}(t)$ but not individual
$\mathbf{F}_{k,\text{BP}}$'s. The non-vanishing fluctuations, see Eqs.
(\ref{F-Fluctuation0}) and (\ref{F-Fluctuation}), in $\mathbf{F}_{k,\text{BP}%
}$ even in equilibrium (where $\mathbf{F}_{\text{BP}}=0$) demonstrates that
$\mathbf{F}_{k,\text{BP}}$'s do not always oppose the motion of $\mathfrak{m}%
_{k}$ in the $\mu$NEQT. This effectively means that if we consider
\textsf{F}$_{\text{BP}}$ to be of the form $(-\mathsf{\gamma v})$,
$\mathsf{v,\gamma}$\ having the outcomes $\left\{  v_{k}\right\}  ,\left\{
\gamma_{k}\right\}  $, respectively, then $\gamma_{k}$ is of either sign.

For thermodynamic considerations, instead of considering \textsf{F}%
$_{\text{BP}}$, we will find it convenient to consider the \emph{internal
microwork }$d_{\text{i}}$\textsf{W}$_{\text{BP}}$\ done by it with outcomes
$\left\{  d_{\text{i}}W_{k,\text{BP}}\right\}  $, see Eq.
(\ref{InternalMicrowork-BP}). Being specific to $\mathfrak{m}_{k}$, the
internal microwork $d_{\text{i}}W_{k,\text{BP}}$ also has a unique value but
no specific sign; only the ensemble average $d_{\text{i}}W_{\text{BP}}%
\doteq\left\langle d_{\text{i}}\mathsf{W}_{\text{BP}}\right\rangle \geq0$\ in
accordance with the second law as we will see. That $d_{\text{i}%
}W_{k,\text{BP}}$ and $\gamma_{k}$ have no sign restriction and $\mathbf{F}%
_{k,\text{BP}}$\ does not always oppose motion is the \emph{unique} feature of
our approach.

We consider the two systems (a) and (b) shown in Fig. \ref{Fig_Piston-Spring}
as our system $\Sigma$. In (a), $P,V$ represent some generic work field and
variable, which we label pressure and volume for convenience. We treat the
piston or the particle as a BP. As the BP forms a subsystem, we denote it by
$\Sigma_{\text{BP}}$\ and the remainder of $\Sigma$\ by $\Sigma_{\text{R}}$.
We assume that the piston in (a) may be either mesoscopic or macroscopic,
while the particle in (b) will be assumed to denote a mesoscopic particle.
Thus, our approach will unify the two different scales. We will establish that
both experience fluctuating Brownian motion \textsf{v} over $\left\{
\mathfrak{m}_{k}\right\}  $, except that for the macroscopic size piston, it
is not noticeable because of its macroscopic mass.

We follow Einstein and focus on the BP's center-of-mass. Let $V$ denote the
volume of $\Sigma$ and $\mathbf{P}_{\text{BP}}$ and $\mathbf{P}_{\text{R}}$
the linear momenta of $\Sigma_{\text{BP}}$ and $\Sigma_{\text{R}}$,
respectively. Let $\mathbf{R}_{\text{BP}}$ and $\mathbf{R}_{\text{R}}$ denote
the displacement of the CM of $\Sigma_{\text{BP}}$ and $\Sigma_{\text{R}}$,
respectively. This makes $\Sigma$\ nonuniform and out of EQ \cite{Note}. We
assume $\Sigma$ stationary in the lab-frame (compare with \cite{Gujrati-II})
so that%
\end{subequations}
\begin{equation}
\mathbf{P}_{\text{BP}}+\mathbf{P}_{\text{R}}=0;
\label{Stationary_Momentum_Condition}%
\end{equation}
we also take $\widetilde{\Sigma}$ and, hence, $\Sigma_{0}$ to be stationary so
that $\Sigma$ has no relative motion with respect to $\widetilde{\Sigma}$ and
$\Sigma_{0}$ as noted above.

We will establish here that $\mathbf{P}_{\text{BP}}$ and $\mathbf{P}%
_{\text{R}}$ must be treated as parameters, which is in the spirit of the
original assumption of Einstein about the CM-motion. As $\mathbf{P}%
_{\text{BP}}$ and $\mathbf{P}_{\text{R}}$ denote the total momenta that we
will associate with respective CMs of $\Sigma_{\text{BP}}$ and $\Sigma
_{\text{R}}$, they can only be changed by "external" forces to the two bodies,
\textit{i.e.}, only the force exerted by $\Sigma_{\text{R}}$ on $\Sigma
_{\text{BP}}$ can change $\mathbf{P}_{\text{BP}}$, and the force exerted by
$\Sigma_{\text{BP}}$ on $\Sigma_{\text{R}}$ can change $\mathbf{P}_{\text{R}}%
$. These forces are equal and opposite as they are internal forces for
$\Sigma$, and cancel out in $\Sigma$; recall that it is stationary. Thus, we
need to determine one of these forces in the following. This cancellation also
applies to each microstate of $\Sigma$. However, these "external" forces are
due to some mutual interactions between the two bodies as we discuss at length
in Sec. \ref{Sec-Example}. In the absence of this interaction, $\mathbf{P}%
_{\text{BP}}$ and $\mathbf{P}_{\text{R}}$ cannot change so it is required for
the viscous drag and it cannot be neglected as we have observed above.

We can treat $\Sigma_{\text{R}}$ as our medium $\widetilde{\Sigma}$ and treat
$\Sigma_{\text{BP}}$ as our system $\Sigma$ with $\mathbf{P}_{\text{R}}$
replaced by the linear momentum $\widetilde{\mathbf{P}}$ of $\widetilde
{\Sigma}$ if we want the BP to interact directly with $\widetilde{\Sigma}$, a
case that is a trivial modification but which is usually studied \cite[for
example]{Kapral0}. We will discuss this situation in Sec. \ref{Sec-Example}.%

%TCIMACRO{\FRAME{ftbpFU}{3.5492in}{1.8273in}{0pt}{\Qcb{We schematically show a
%system of (a) gas in a cylinder with a movable piston (at a distance $l$ from
%the left wall) under an external pressure $P_{0\text{ }}$controlling the
%volume $V$ of the gas and the piston, and (b) a particle attached to an end of
%a spring in a fluid and being pulled by an external force $F_{0}$, which
%causes the spring to stretch or compress depending on its direction. The other
%end of the spring is fixed to the left wall and $l$ denotes the spring length.
%The volume of $\Sigma$ in (b) is kept fixed. In an irreversible process, the
%internal pressure $P$ or the spring force $F_{\text{s}}$ is different in
%magnitude from the external pressure $P_{0}$ \ or the external force $F_{0}$,
%respectively. Their difference is the force imbalance that causes the
%irreversible macrowork. The temperature of the system is $T$; $T_{0},P_{0}$ or
%$F_{0}$ are the macrofields of the medium $\widetilde{\Sigma}$.}%
%}{\Qlb{Fig_Piston-Spring}}{}%
%{\special{ language "Scientific Word";  type "GRAPHIC";
%maintain-aspect-ratio TRUE;  display "USEDEF";  valid_file "F";
%width 3.5492in;  height 1.8273in;  depth 0pt;  original-width 5.418in;
%original-height 2.7691in;  cropleft "0";  croptop "1";  cropright "1";
%cropbottom "0";  filename 'Piston-Spring-Mod.eps';file-properties "XNPEU";}}
%}%
%BeginExpansion
\begin{figure}
[ptb]
\begin{center}
\includegraphics[
height=1.8273in,
width=3.5492in
]%
{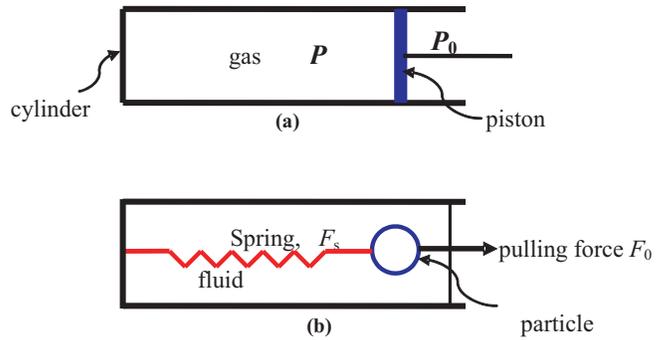}%
\caption{We schematically show a system of (a) gas in a cylinder with a
movable piston (at a distance $l$ from the left wall) under an external
pressure $P_{0\text{ }}$controlling the volume $V$ of the gas and the piston,
and (b) a particle attached to an end of a spring in a fluid and being pulled
by an external force $F_{0}$, which causes the spring to stretch or compress
depending on its direction. The other end of the spring is fixed to the left
wall and $l$ denotes the spring length. The volume of $\Sigma$ in (b) is kept
fixed. In an irreversible process, the internal pressure $P$ or the spring
force $F_{\text{s}}$ is different in magnitude from the external pressure
$P_{0}$ \ or the external force $F_{0}$, respectively. Their difference is the
force imbalance that causes the irreversible macrowork. The temperature of the
system is $T$; $T_{0},P_{0}$ or $F_{0}$ are the macrofields of the medium
$\widetilde{\Sigma}$.}%
\label{Fig_Piston-Spring}%
\end{center}
\end{figure}
%EndExpansion

\subsection{Deriving Microstate Thermodynamics \label{Sec-MicroTH}}

\subsubsection{Thermodynamic Parameters}

This section is important to demonstrate the importance of relative internal
motion between two parts of a system for viscous dissipation. We will first
treat the piston problem as it is commonly discussed in introductory physics.
The Hamiltonian of the system is written as $\mathcal{H}(\left.
\mathbf{z}\right\vert V,\mathbf{P}_{\text{BP}},\mathbf{P}_{\text{R}})$ in
which $V,\mathbf{P}_{\text{BP}}$ and $\mathbf{P}_{\text{R}}$ form $\mathbf{Z}%
$; here $\mathbf{P}_{\text{BP}}$ and $\mathbf{P}_{\text{R}}$ are two internal
variables. In the following, we treat $\mathbf{z}$\ as discrete and use $k$ as
a label. Let us consider the change
\begin{subequations}
\begin{equation}
d\mathcal{H}=\frac{\partial\mathcal{H}}{\partial\mathbf{z}}\cdot
d\mathbf{z}+\frac{\partial E_{k}}{\partial V}dV+\frac{\partial\mathcal{H}%
}{\partial\mathbf{P}_{\text{BP}}}\cdot d\mathbf{P}_{\text{BP}}+\frac
{\partial\mathcal{H}}{\partial\mathbf{P}_{\text{R}}}\cdot d\mathbf{P}%
_{\text{R}}. \label{HamiltonianChange}%
\end{equation}
The first term on the right vanishes identically due to Hamilton's equations
of motion, so it is the variations due to $d\mathbf{Z}$ ($dV,d\mathbf{P}%
_{\text{BP}}$ and $d\mathbf{P}_{\text{R}}$) that generate any change in
$\mathcal{H}$:%
\begin{equation}
d\mathcal{H}=\frac{\partial E_{k}}{\partial V}dV+\frac{\partial\mathcal{H}%
}{\partial\mathbf{P}_{\text{BP}}}\cdot d\mathbf{P}_{\text{BP}}+\frac
{\partial\mathcal{H}}{\partial\mathbf{P}_{\text{R}}}\cdot d\mathbf{P}%
_{\text{R}}. \label{HamiltonianChange-Work}%
\end{equation}
We identify this as the \emph{generalized} work $dW_{k}$ done by the system
\cite{Gujrati-GeneralizedWork,Gujrati-GeneralizedWork-Expanded,Gujrati-Entropy1,Gujrati-Entropy2}%
. We introduce "generalized mechanical forces" in terms of $E_{k}$ (we
suppress $\mathbf{Z}$ and use $E_{k}$ for $\mathcal{H}(\left.  \mathbf{z}%
\right\vert \mathbf{Z})$ unless clarity is needed) using the standard
definition%
\end{subequations}
\begin{equation}
P_{k}\doteq-\frac{\partial E_{k}}{\partial V},-\mathbf{V}_{k\mathbf{,}%
\text{BP}}\doteq-\frac{\partial E_{k}}{\partial\mathbf{P}_{\text{BP}}%
},-\mathbf{V}_{k\mathbf{,}\text{R}}\doteq-\frac{\partial E_{k}}{\partial
\mathbf{P}_{\text{R}}}; \label{Microforces}%
\end{equation}
these are the conjugate microfields of $V,\mathbf{P}_{\text{BP}}$ and
$\mathbf{P}_{\text{R}}$, respectively. As $E_{k}$ is uniquely determined by
its arguments, these microforces are \emph{deterministic} functions of
$V,\mathbf{P}_{\text{BP}}$ and $\mathbf{P}_{\text{R}}$ and are continuous in a
proper thermodynamic theory; see below. The corresponding generalized
microworks are $P_{k}dV$, etc. so the net microwork done by $\Sigma$ is
\begin{equation}
dW_{k}=P_{k}dV-\mathbf{V}_{k\mathbf{,}\text{BP}}\mathbf{\cdot}d\mathbf{P}%
_{\text{BP}}-\mathbf{V}_{k\mathbf{,}\text{R}}\mathbf{\cdot}d\mathbf{P}%
_{\text{R}}=-dE_{k}. \label{Microwork}%
\end{equation}
The ensemble averages of the various microworks are given by $PdV$, etc., see
Landau and Lifshitz \cite{Landau,Landau-Fluid} and elsewhere \cite{Gujrati-II}%
, where%
\begin{equation}
P\doteq-\partial E/\partial V,\mathbf{V}_{\text{BP}}\doteq\partial
E/\partial\mathbf{P}_{\text{BP}},\mathbf{V}_{\text{R}}\doteq\partial
E/\partial\mathbf{P}_{\text{R}} \label{Macroforces}%
\end{equation}
denote macroforces in the MNEQT; here $E$ is the macroenergy $E=\left\langle
\mathcal{H}(\left.  \mathbf{z}\right\vert V,\mathbf{P}_{\text{BP}}%
,\mathbf{P}_{\text{R}})\right\rangle $ in the lab frame; the conjugate
macrofields are the average pressure $P$ and the average velocities (or
affinities) $\mathbf{V}_{\text{BP}},\mathbf{V}_{\text{R}}$ of the BP and
$\Sigma_{\text{R}}$, respectively, with $E,V,\mathbf{P}_{\text{BP}}%
,\mathbf{P}_{\text{R}}$ forming $\mathfrak{S}$, where the entropy $S$ is
defined as a state function.

We assume that $\Sigma$ is in IEQ \cite{Gujrati-I,Gujrati-II}. Thus, $S$ is a
state function $S(E,V,\mathbf{P}_{\text{BP}},\mathbf{P}_{\text{R}})$ defined
in $\mathfrak{S}$ because of which IEQ macrostates have close similarities
with EQ macrostates so that the temperature $T$ of the system is given by Eq.
(\ref{IEQ-Temp}) and the generalized macroheat by $dQ=TdS$; see below. In
addition, IEQ states have no memory of where they come from. Despite this, IEQ
states have irreversible entropy generation. In EQ, $T=T_{0},P=P_{0}$; see
Fig. \ref{Fig_System}. In addition, $\mathbf{V}_{\text{BP}}$ and
$\mathbf{V}_{\text{R}}$ \emph{vanish} in EQ so they also represent the
vanishing affinities of the medium. As they vanish, they contribute nothing to
the \emph{exchange} microwork $d_{\text{e}}W_{k}$ \cite{deGroot,Prigogine},
which then becomes $d_{\text{e}}W_{k}=P_{0}dV$. \ 

From $E(S,V,\mathbf{P}_{\text{BP}},\mathbf{P}_{\text{R}})$ we have
$dE=TdS-PdV+\mathbf{V}_{\text{BP}}\mathbf{\cdot}d\mathbf{P}_{\text{BP}%
}+\mathbf{V}_{\text{R}}\mathbf{\cdot}d\mathbf{P}_{\text{R}}$, which we rewrite
using Eq. (\ref{Stationary_Momentum_Condition}) as%
\begin{equation}
dE=TdS-PdV+\mathbf{V\cdot}d\mathbf{P}_{\text{BP}}, \label{dE-relation}%
\end{equation}
in terms of the $\emph{relative}$ \emph{velocity} or the \emph{drift velocity
}%
\begin{equation}
\mathbf{V\doteq V}_{\text{BP}}-\mathbf{V}_{\text{R}}=\mathbf{P}_{\text{BP}}/m
\label{Rel-Velocity}%
\end{equation}
of the BP with respect to $\Sigma_{\text{R}}$ in the MNEQT; here $m$ is the
reduced mass of $\Sigma_{\text{BP}}$ and $\Sigma_{\text{R}}$.

We remark that even though $\mathbf{P}_{\text{BP}}$ as the total momentum of
the BP is its intrinsic property, it is coupled to $\Sigma_{\text{R}}$ in
accordance with Eq. (\ref{Stationary_Momentum_Condition}). Consequently,
\begin{equation}
E_{\text{CM}}=\mathbf{P}_{\text{BP}}^{2}/2m \label{CM-KE}%
\end{equation}
is the sum of the kinetic energies of the CM's of $\Sigma_{\text{BP}}$ and
$\Sigma_{\text{R}}$. This is not surprising as the CM-kinetic energies can be
always separated out from the motion of the particles in $\Sigma$. As
discussed in Sec. \ref{Sec-IEQ}, it is possible to have a different
temperature $T_{\text{CM}}$ associated with the CM-motion, which can be very
different from $T$. As this motion slows down, $T_{\text{CM}}$ will continue
to decrease; cf. Eq. (\ref{Irreversible-BP-Energy-Temperature}).

\subsubsection{Einstein-Langevin Duality of the Relative Motion}

We can also rewrite the drift velocity term using the identity
\begin{equation}
\mathbf{V\cdot}d\mathbf{P}_{\text{BP}}\equiv\mathbf{F}_{\text{BP}%
}\mathbf{\cdot}d\mathbf{R=}d(\mathbf{P}_{\text{BP}}^{2}/2m),
\label{V-F_BP-Relation}%
\end{equation}
where $\mathbf{F}_{\text{BP}}\doteq d\mathbf{P}_{\text{BP}}\mathbf{/}dt$ is
the "external" \emph{macroforce}\ as discussed above, and $d\mathbf{R=V}dt$ is
the \emph{relative displacement} of the BP in the MNEQT. Because of this
identity, we can either use $\mathbf{P}_{\text{BP}}$ or $\mathbf{R}$ as a
parameter in $\mathbf{Z}$ so\ the macroenergy $E$ can be expressed either as
$E_{\mathbf{P}}\doteq E(S,V,\mathbf{P}_{\text{BP}})$ or $E_{\mathbf{R}}\doteq
E(S,V,\mathbf{R})$, a simplification due to the thermodynamic treatment, with
\begin{equation}
\mathbf{V}=\partial E(S,V,\mathbf{P}_{\text{BP}})/\partial\mathbf{P}%
_{\text{BP}},\mathbf{F}_{\text{BP}}=\partial E(S,V,\mathbf{R})/\partial
\mathbf{R.} \label{V-F_Definition}%
\end{equation}
We now deal with a reduced state space $\mathfrak{S}^{\prime}$ formed by
$E,V,\mathbf{P}_{\text{BP}}$ or $E,V,\mathbf{V}$. In a proper thermodynamic
theory, $E$ is at least twice differentiable (we do not consider any phase
transition in this work) so the above derivatives exist and are continuous.

There is very interesting duality hidden in Eq. (\ref{V-F_BP-Relation}). The
choice of using $\mathbf{R}$ as a parameter provides a justification for
Einstein's approach involving the CM location of the BP and considering the
"osmotic" force $\mathbf{F}_{\text{BP}}$ acting on it; there was no need to
consider its momentum at all. Thus, his choice in our approach corresponds to
using $E$ as $E(S,V,\mathbf{R})$. On the other hand, Langevin's interest was
not in using $\mathbf{R}$ but its momentum $\mathbf{P}_{\text{BP}}$ to write
down the equation of motion; cf. Eq. (\ref{LangevinEquation}). While he was
not interested in thermodynamics, his choice in our approach will correspond
to using $E$ as $E(S,V,\mathbf{P}_{\text{BP}})$. As a consequence, our
thermodynamic approach is a hybrid approach capable of allowing both
approaches in a unifying way. However, as the first equation in Eq.
(\ref{V-F_Definition}) merely gives back $\mathbf{P}_{\text{BP}}=m\mathbf{V}$,
it is not much of a use. Therefore, we will normally use $E(S,V,\mathbf{R})$
with $\mathbf{R}$ as a parameter, which will be extremely useful in our
thermodynamic investigation.

\subsubsection{Microwork and Microheat}

Using the generalized macrowork $dW=PdV-\mathbf{F}_{\text{BP}}\mathbf{\cdot
}d\mathbf{R}$ and macroheat $dQ=TdS$, we have $dE=dQ-dW$, which expresses the
first law in terms of the generalized quantities. This expresses an important
fact: the two terms in it denote\emph{ independent} variations of the energy
$E$: $dQ$ denotes the change due to entropy variation and $dW$ isentropic
variation. This allows us to deal with $dW$ as a purely mechanical ($dS=0$)
quantity resulting in microstate energy changes. This is easily seen from the
following argument. From $E\equiv\left\langle \mathsf{E}\right\rangle \doteq%
%TCIMACRO{\tsum \nolimits_{k}}%
%BeginExpansion
{\textstyle\sum\nolimits_{k}}
%EndExpansion
E_{k}p_{k}$ in terms of $E_{k}=E_{k}(V,\mathbf{P}_{\text{BP}})$ [or
equivalently $E_{k}=E_{k}(V,\mathbf{R})$] and $p_{k}$, we have
\[
dE=%
%TCIMACRO{\tsum \nolimits_{k}}%
%BeginExpansion
{\textstyle\sum\nolimits_{k}}
%EndExpansion
E_{k}dp_{k}+%
%TCIMACRO{\tsum \nolimits_{k}}%
%BeginExpansion
{\textstyle\sum\nolimits_{k}}
%EndExpansion
p_{k}dE_{k},
\]
where
\[
dE_{k}=(\partial E_{k}/\partial V)dV+(\partial E_{k}/\partial\mathbf{P}%
_{\text{BP}})\mathbf{\cdot}d\mathbf{P}_{\text{BP}}.
\]
The first sum in $dE$ involves $dp_{k}$ at fixed $E_{k}$, and evidently
corresponds to the entropy change $dS$. It denotes the generalized heat
\[
dQ=\left\langle d\mathsf{Q}\right\rangle =%
%TCIMACRO{\tsum \nolimits_{k}}%
%BeginExpansion
{\textstyle\sum\nolimits_{k}}
%EndExpansion
p_{k}dQ_{k}\doteq%
%TCIMACRO{\tsum \nolimits_{k}}%
%BeginExpansion
{\textstyle\sum\nolimits_{k}}
%EndExpansion
E_{k}dp_{k}.
\]
Here, we have used $d\mathsf{Q}$ to denote a random variable with outcomes
$\left\{  dQ_{k}\right\}  $. The second sum in $dE$ involves $dE_{k}$ at fixed
$p_{k}$ and evidently corresponds to $dS=0$. Its negative is the generalized
work (we use $d\mathsf{W}$ to denote a random variable with outcomes $\left\{
dW_{k}\right\}  $)
\[
dW=\left\langle d\mathsf{W}\right\rangle \doteq-%
%TCIMACRO{\tsum \nolimits_{k}}%
%BeginExpansion
{\textstyle\sum\nolimits_{k}}
%EndExpansion
p_{k}dE_{k},
\]
and \emph{uniquely} identifies microwork $dW_{k}=-dE_{k}$ from which we can
uniquely identify mechanical microforces $P_{k}=-(\partial E_{k}/\partial V)$
and $(-\mathbf{V}_{k})=-\partial E_{k}/\partial\mathbf{P}_{\text{BP}}$ that
appear in the $\mu$NEQT; these quantities refer to the system alone. This can
be done because $dW_{k}$ is a mechanical quantity and is oblivious to $p_{k}$.

We use this uniqueness of identifying system-specific microforces to construct
the $\mu$NEQT in $\mathfrak{S}^{\prime}$. We have
\begin{subequations}
\begin{equation}
dW_{k}=P_{k}dV-\mathbf{V}_{k}\mathbf{\cdot}d\mathbf{P}_{\text{BP}}\equiv
P_{k}dV-\mathbf{F}_{k\text{,BP}}\mathbf{\cdot}d\mathbf{R}%
,\label{Microwork-Reduced}%
\end{equation}
where
\begin{equation}
\mathbf{V}_{k}\doteq\partial E_{k}/\partial\mathbf{P}_{\text{BP}}%
,\mathbf{F}_{k\text{,BP}}\doteq\partial E_{k}/\partial\mathbf{R}%
.\label{MicroForce-Velocity}%
\end{equation}
Using $d_{\text{e}}W=P_{0}dV$, we identify the \emph{irreversible} macrowork
$d_{\text{i}}W\doteq dW-d_{\text{e}}W$%
\end{subequations}
\begin{equation}
d_{\text{i}}W=(P-P_{0})dV-\mathbf{F}_{\text{BP}}\mathbf{\cdot}d\mathbf{R\geq
}0\label{Irreversible Work}%
\end{equation}
from the second law so that we must have%
\begin{equation}
(P-P_{0})dV\geq0,\mathbf{F}_{\text{BP}}\mathbf{\cdot}d\mathbf{R}%
\leq0\label{SecondLawConsequences}%
\end{equation}
separately as each term refers to an \emph{independent} internal process. For
the example in Fig. \ref{Fig_Piston-Spring}(b), we must replace $(P-P_{0})dV$
by $(F_{\text{s}}-F_{0})dl$, where $dl$ is the spring compression. Similarly,
the exchange heat with $\widetilde{\Sigma}$\ is $d_{\text{e}}Q=T_{0}%
d_{\text{e}}S$ and the irreversible heat is
\begin{equation}
d_{\text{i}}Q=TdS-T_{0}d_{\text{e}}S=(T-T_{0})d_{\text{e}}S+Td_{\text{i}%
}S~\mathbf{\geq}0.\label{diQ}%
\end{equation}
As $dE=d_{\text{e}}Q-d_{\text{e}}W$ also expresses the first law, we must
have
\begin{equation}
d_{\text{i}}Q=d_{\text{i}}W\ \mathbf{\geq}0\label{diQ-diW}%
\end{equation}
in the MNEQT. Therefore, determining $d_{\text{i}}W$ allows us to indirectly
determine $d_{\text{i}}Q$. In this study, we will not be directly studying
generalized heat, which we will consider in a future publication.

We thus see that the $\mu$NEQT is obtained directly and uniquely from the
MNEQT. However, the most important and distinguishing feature of our approach
as noted above is that the microwork $dW_{k}$ is deterministic (independent of
the probability $p_{k}$) so it represents a truly microscopic mechanical work
from which we can directly identify various microscopic forces.\ Thus, even
though we have started with the MNEQT, the microscopic work $dW_{k}$ in Eq.
(\ref{Microwork-Reduced}) directly and uniquely identifies microscopic forces
$P_{k}$ and $\mathbf{V}_{k}$ in terms of purely mechanical quantities of the
system alone. As we will see, the $\mu$NEQT provides additional details than
are not available from using the MNEQT alone.

\subsubsection{IEQ Microstate Probabilities}

In an IEQ state \cite{Gujrati-Entropy-Note}, we have two possible forms of
$p_{k}$ based on the choice of the parameters $\mathbf{R}$ or $\mathbf{F}%
_{\text{BP}}$:
\begin{align}
p_{k}  &  =\exp[\Phi-(E_{k}+P_{k}V-\mathbf{F}_{k,\text{BP}}\mathbf{\cdot
R})]/T],\label{IEQ-probabilities}\\
p_{k}  &  =\exp[\Phi-(E_{k}+P_{k}V-\mathbf{F}_{\text{BP}}\mathbf{\cdot R}%
_{k})]/T], \label{IEQ-probabilities0}%
\end{align}
with $\left\langle \mathsf{1}\right\rangle ,\left\langle \mathsf{E}%
\right\rangle ,\left\langle \mathsf{P}\right\rangle $, and $\left\langle
\mathsf{F}_{\text{BP}}\right\rangle \ $in Eq.(\ref{IEQ-probabilities}) or
$\left\langle \mathsf{R}\right\rangle $ in Eq.(\ref{IEQ-probabilities0}),
fixed so that $\beta=1/T,\beta V$ and $(-\beta\mathbf{R)}$ or $(-\beta
\mathbf{F}_{\text{BP}}\mathbf{)}$ are Lagrange multipliers to maximize the
entropy \cite{Gujrati-Entropy2}. Here, the normalization function $\Phi
$\ ensures that $p_{k}$'s add to unity. The form is what is expected in EQ
except for the presence of the internal variable term and of the fields $T$
and $P$ of the IEQ state. Thus, most of the EQ results can be easily extended
to an IEQ state.

We now prove a very useful and general theorem for systems in IEQ that allows
us to identify the change in the IEQ temperature as its parameters change.

\begin{theorem}
\label{Th-Temp-Change} As the parameters in $\mathbf{Z}$ change and change the
microstate probabilities, the change in the temperature is given by%
\begin{equation}
dT=T\frac{\left\langle d\Psi\right\rangle }{\left\langle \Psi\right\rangle }
\label{dT0}%
\end{equation}
where we have introduced%
\begin{subequations}
\begin{equation}
\Psi_{k}=\Phi-(E_{k}+P_{k}V-\mathbf{F}_{k,\text{BP}}\mathbf{\cdot
R})\mathbf{,} \label{Psi-function-k}%
\end{equation}

\end{subequations}
\end{theorem}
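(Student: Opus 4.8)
The plan is to exploit the explicit exponential form of the IEQ probabilities in Eq.~(\ref{IEQ-probabilities}), which, with the abbreviation $\Psi_{k}$ of Eq.~(\ref{Psi-function-k}), reads simply $p_{k}=\exp(\Psi_{k}/T)$. Taking the logarithm converts this into the pivotal identity $\Psi_{k}=T\ln p_{k}$, which trades the several thermodynamic fields buried inside $\Psi_{k}$ for a single clean relation among $\Psi_{k}$, $T$, and $\ln p_{k}$. Once this identity is in hand, everything else is essentially an exercise in differentiating under the normalization constraint.

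First I would evaluate the denominator of Eq.~(\ref{dT0}). Averaging the identity gives $\left\langle \Psi\right\rangle =\sum_{k}p_{k}\Psi_{k}=T\sum_{k}p_{k}\ln p_{k}=-TS$, where $S=-\sum_{k}p_{k}\ln p_{k}$ is the entropy. Next I would evaluate the numerator by differentiating $\Psi_{k}=T\ln p_{k}$, which yields $d\Psi_{k}=(dT)\ln p_{k}+T\,dp_{k}/p_{k}$, so that $\left\langle d\Psi\right\rangle =\sum_{k}p_{k}\,d\Psi_{k}=(dT)\sum_{k}p_{k}\ln p_{k}+T\sum_{k}dp_{k}$. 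The normalization $\sum_{k}p_{k}=1$ forces $\sum_{k}dp_{k}=0$, which annihilates the second term and leaves $\left\langle d\Psi\right\rangle =-S\,dT$. Forming the ratio then gives $T\left\langle d\Psi\right\rangle /\left\langle \Psi\right\rangle =T(-S\,dT)/(-TS)=dT$, which is precisely Eq.~(\ref{dT0}).

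The main obstacle here is interpretive rather than computational. One must read $\left\langle d\Psi\right\rangle$ as the ensemble average $\sum_{k}p_{k}\,d\Psi_{k}$ of the microstate differential, and \emph{not} as $d\left\langle \Psi\right\rangle$; the two differ by the term $\sum_{k}\Psi_{k}\,dp_{k}=-T\,dS$, and only the former interpretation makes the theorem correct. The companion point to watch is deploying the constraint $\sum_{k}dp_{k}=0$ at exactly the right moment, since it is this probability conservation alone that removes the otherwise surviving $T\sum_{k}dp_{k}$ contribution. With these two observations made explicit, the proof reduces entirely to the elementary facts that $\sum_{k}p_{k}\ln p_{k}=-S$ and that normalization kills the $\sum_{k}dp_{k}$ term.
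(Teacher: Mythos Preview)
Your proof is correct and follows essentially the same route as the paper: both start from $p_{k}=\exp(\Psi_{k}/T)$, differentiate, and invoke the normalization constraint $\sum_{k}dp_{k}=0$ to extract Eq.~(\ref{dT0}). The paper differentiates $p_{k}$ directly to get $dp_{k}=p_{k}\bigl[d\Psi_{k}/T-\Psi_{k}\,dT/T^{2}\bigr]$ and then sums, whereas you differentiate the equivalent logarithmic relation $\Psi_{k}=T\ln p_{k}$ and then average; these are the same identity rearranged. Your added observations that $\left\langle \Psi\right\rangle =-TS$ and that $\left\langle d\Psi\right\rangle \neq d\left\langle \Psi\right\rangle$ are not made explicit in the paper but are consistent with it and sharpen the interpretation.
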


\begin{proof}
Using
\begin{subequations}
\begin{equation}
p_{k}=\exp(\Psi_{k}/T), \label{p_k-Psi}%
\end{equation}
we find that
\begin{equation}
dp_{k}=p_{k}[\frac{d\Psi_{k}}{T}-\frac{\Psi_{k}dT}{T^{2}}]. \label{dp_k-Psi}%
\end{equation}
The average $\left\langle \Psi\right\rangle $ is given by%
\end{subequations}
\begin{subequations}
\begin{equation}
\left\langle \Psi\right\rangle =\Phi-E-PV+\mathbf{F}_{\text{BP}}\mathbf{\cdot
R,} \label{Psi-Average}%
\end{equation}
and $\left\langle d\Psi\right\rangle $ is given by
\begin{equation}
\left\langle d\Psi\right\rangle =d\Phi+dW-d(PV)+d(\mathbf{F}_{\text{BP}%
}\mathbf{\cdot R).} \label{dPsi-Average}%
\end{equation}
Eq. (\ref{dT0}) now follows from $%
%TCIMACRO{\tsum \nolimits_{k}}%
%BeginExpansion
{\textstyle\sum\nolimits_{k}}
%EndExpansion
dp_{k}=0$ as an identity for any IEQ macrostate.
\end{subequations}
\end{proof}

To use Eq. (\ref{dT0}), we must explicitly evaluate $\left\langle
\Psi\right\rangle $ and $\left\langle d\Psi\right\rangle $\ using Eqs.
(\ref{Psi-Average}) and (\ref{dPsi-Average}), respectively, in terms of
quantities appearing on their right sides.

\subsection{Viscous Drag and the Langevin Limit\label{Sec-ViscousDrag}\ \ \ }

We use the notation
\begin{equation}
d_{\text{i}}W_{\text{BP}}\doteq-\mathbf{F}_{\text{BP}}\mathbf{\cdot
}d\mathbf{R}\equiv-\mathbf{V\cdot}d\mathbf{P}_{\text{BP}}\geq
0\label{Irreversible Friction Work}%
\end{equation}
related to the second irreversible contribution in Eq.
(\ref{SecondLawConsequences}). It follows that for the inequality to be valid,
we must have the following form for NEQ $\mathbf{F}_{\text{BP}}$ in the MNEQT%
\begin{equation}
\mathbf{F}_{\text{BP}}=-\mathbf{V}f(T,\mathbf{V},t),f(T,\mathbf{V}%
,t)>0,\label{FrictionForm}%
\end{equation}
in which $f(T,\mathbf{V},t)$ must be an even scalar function of $\mathbf{V=P}%
_{\text{BP}}/m$\ at each instant so that $d_{\text{i}}W_{\text{BP}%
}=f(T,\mathbf{V},t)\mathbf{V}^{2}dt\geq0$. As $\mathbf{F}_{\text{BP}}$ opposes
motion, it represents the viscous force we are interested in. Let us compare
$\mathbf{F}_{\text{BP}}$ above with its definition in Eq.
(\ref{V-F_Definition}), according to which it is a derivative of $E$ with
respect to $\mathbf{R}$. The only way this derivative can give a result along
the direction of $\mathbf{V}$ is for the scalar function $E$ to be a function
of the combination
\begin{equation}
u\doteq\mathbf{V}\cdot\mathbf{R}\label{ScalarCombination}%
\end{equation}
as a scalar. We thus conclude that
\begin{equation}
f(T,\mathbf{V},t)=-\partial E(S,V,u)/\partial u,\label{General-friction-f}%
\end{equation}
so $f(T,\mathbf{V},t)$ will also include a dependence on $\mathbf{R}$ in $f$
through $u$ so we must write it as $f(T,V,u)$. Hopefully, this will make Eq.
(\ref{FrictionForm}) suitable for some active BPs
\cite{Marconi,Romanczuk,Kapral-2017,Fodor}. In general, the dependence on
$\mathbf{V}$\ through $u$\ may be very complex as will become clear in Sec.
\ref{Sec-Example}.

As $\mathbf{F}_{\text{BP}}$\ is the macroforce corresponding to the viscous
drag, the above discussion provides a \emph{thermodynamic justification} of
the viscous drag. To make connection with the Langevin equation, we will
assume $f(u,t)$ to be a power series in $u^{2}\ $with $f(0,t)=\gamma(t)\geq0$
so that $d_{\text{i}}W_{\text{BP}}\simeq\gamma(t)\mathbf{V\cdot}d\mathbf{R}$
is the \emph{frictional work} in the small-speed approximation, which will be
called the \emph{Langevin limit} from now on. In this limit, $d_{\text{i}%
}W_{\text{BP}}\simeq\gamma(t)\mathbf{V}^{2}(t)dt\geq0$ at any instant $t$.
Langevin takes $\gamma(t)$ to be a constant $\gamma$.

The above discussion also provides a thermodynamic justification of the
viscous drag in the Langevin equation in the small-speed approximation. For
arbitrary speeds, we can treat $f(T,u,t)$ as the analog of an \emph{effective}
$\gamma_{\text{eff}}$ in Eq. (\ref{LangevinEquation}), which is a complicated
function of $T,\mathbf{V}$ and $t$, a situation commonly encountered in active
BPs \cite{Marconi,Romanczuk,Kapral-2017,Fodor}. We will not pursue active BPs
in this work except tangentially; they will be treated later.

As we will see below, we get more insight into the viscous force
$\mathbf{F}_{\text{BP}}(t)$ when we consider its microanalogs $\mathbf{F}%
_{k,\text{BP}}(\mathbf{V,}t)$ in the $\mu$NEQT.

\section{Fluctuations and A New Equation of Motion
\label{Sec-InternalContribution}}

The fluctuations in random variables are the hallmark of a statistical system
and are always present whether we consider a reversible or an irreversible
process. Let us consider the random variable \textsf{P}\ with outcomes
$\left\{  P_{k}\right\}  $. The fluctuation $\Delta$\textsf{P }has outcomes
$\left\{  P_{k}-P\right\}  $ with $P=\left\langle \mathsf{P}\right\rangle $,
which determine the mean square fluctuation $\left\langle (\Delta
\mathsf{P})^{2}\right\rangle \geq0$. We know from EQ statistical mechanics
($P=P_{0},T=T_{0}$) \cite{Landau} that
\[
\left\langle (\Delta\mathsf{P})^{2}\right\rangle _{\text{eq}}=-T_{0}(\partial
P/\partial V)_{S}%
\]
is not identically zero so $P_{k}$ fluctuates over $\mathfrak{m}_{k}$ and
takes values on both sides of $P_{0}$. Since \textsf{P} is not determined by
any macrostate, $\left\{  P_{k}\right\}  $ remain the same whether we are
dealing with an EQ or a NEQ macrostate. Moreover, $d_{\text{i}}W_{k,V}%
\doteq(P_{k}-P_{0})dV$ does not have a particular sign in general, even though
the macrowork $d_{\text{i}}W_{V}\doteq(P-P_{0})dV\geq0$ is never negative; see
Eq. (\ref{SecondLawConsequences}). Because of this conformity, it is customary
to call the macrowork $d_{\text{i}}W_{V}$ the \emph{irreversible} work. As the
microwork $d_{\text{i}}W_{k,V}$ does not follow the sign requirement, it is
better to call it internal microwork as noted above.

Similarly, there are fluctuations in the \emph{random variables}
$\mathsf{F}_{\text{BP}}$ and $\mathsf{V}$ (with outcomes $\left\{
\mathbf{F}_{k\text{,BP}}\right\}  $ and $\left\{  \mathbf{V}_{k}\right\}  $,
respectively) around the average $\mathbf{F}_{\text{BP}}\equiv\left\langle
\mathsf{F}_{\text{BP}}\right\rangle $ and $\mathbf{V}\equiv\left\langle
\mathsf{V}\right\rangle $, respectively, which are always present. This will
be explicitly demonstrated later; see Eqs. (\ref{F-Fluctuation0}%
-\ref{F-Fluctuation}). As the EQ affinity $\mathbf{F}_{0\text{BP}}=0$ or
$\mathbf{V}_{0}=0$\ so that $d_{\text{e}}W_{k\text{,BP}}\equiv0$
\cite{deGroot,Prigogine,Maugin}, we conclude that $d_{\text{i}}W_{k\text{,BP}%
}\equiv dW_{k\text{,BP}}$ fluctuates over $\left\{  \mathfrak{m}_{k}\right\}
$ around the macroaverage $dW_{\text{BP}}\equiv d_{\text{i}}W_{\text{BP}}$.
Thus, the internal microwork $d_{\text{i}}W_{k\text{,BP}}$ does not have a
particular sign, while $d_{\text{i}}W_{\text{BP}}$ does as seen in Eq.
(\ref{Irreversible Friction Work}).

It is important to make the following three remarks concerning $\mathbf{F}%
_{k,\text{BP}}(t)$:

\begin{enumerate}
\item[(a)] It is not broken into a fast- and a slow-component for each $k$ as
is common in the Langevin approach.

\item[(b)] It represents the outcome of a random variable $\mathsf{F}%
_{\text{BP}}$ over the microstates.

\item[(c)] For a given $k$, $\mathsf{F}_{\text{BP}}$ possesses no randomness
so $\mathbf{F}_{k,\text{BP}}(t)$ has a unique value.
\end{enumerate}

\subsection{A New Equation of Motion}

That the internal microwork $d_{\text{i}}W_{k\text{,BP}}$ has no sign
restriction is another point of departure from Langevin's approach and is
discussed next. We focus on the form%
\begin{equation}
d_{\text{i}}W_{k\text{,BP}}\doteq-\mathbf{F}_{k\text{,BP}}\mathbf{\cdot
}d\mathbf{R}_{k} \label{InternalMicrowork-BP}%
\end{equation}
for $\mathfrak{m}_{k}$ and determine Newton's equation for the BP at a
relative location $\mathbf{R}_{k}(t)$; $k$ on $\mathbf{R}_{k}(t)$\ is added
for clarity. The \emph{deterministic} equation
\begin{equation}
md^{2}\mathbf{R}_{k}(t)/dt^{2}=md\mathbf{V}_{k}(t)/dt=\mathbf{F}_{k,\text{BP}%
}(t) \label{NewLangevinEq}%
\end{equation}
describes the trajectory of the BP in the $\mu$NEQT. The trajectory
$\mathbf{R}_{k}(t)$ is obtained by integrating twice Eq. (\ref{NewLangevinEq})
using basic calculus, and is also deterministic and at least twice
differentiable. Introducing the deviation $\Delta\mathbf{F}_{k,\text{BP}%
}(t)\doteq\mathbf{F}_{k,\text{BP}}(t)-\mathbf{F}_{\text{BP}}(t)$, we can
express $\mathbf{F}_{k,\text{BP}}(t)$ in terms of $\mathbf{F}_{\text{BP}}(t)$
as%
\begin{equation}
\mathbf{F}_{k,\text{BP}}(t)\doteq\mathbf{F}_{\text{BP}}(t)+\Delta
\mathbf{F}_{k,\text{BP}}(t), \label{NewLangevinForce}%
\end{equation}
which may suggest that $\Delta\mathbf{F}_{k,\text{BP}}(t)$ is Langevin's
$\boldsymbol{\xi}(t)$. This is where other important differences from the
Langevin approach appear. The $\mathbf{F}_{\text{BP}}(t)$ is a function of the
average relative velocity $\mathbf{V}(t)\equiv\left\langle \mathsf{V}%
(t)\right\rangle $ so it does not represent the microforce $\mathbf{F}%
_{k,\text{f}}(t)\ $that appears in Eqs. (\ref{LangevinEquation0}%
-\ref{LangevinEquation}). Furthermore, $\mathbf{F}_{k\text{,BP}}\doteq\partial
E_{k}/\partial\mathbf{R}$ is deterministic in the $\mu$NEQT as noted above. So
is $\mathbf{F}_{\text{BP}}(t)$. Thus, $\Delta\mathbf{F}_{k,\text{BP}}(t)$ also
takes a \emph{single} value for each $\mathfrak{m}_{k}$,\ while
$\boldsymbol{\xi}(t)$ is stochastic. The stochasticity in the $\mu$NEQT
emerges as we average Eq. (\ref{NewLangevinEq}) over all microstates to yield
\begin{equation}
md^{2}\mathbf{R}(t)/dt^{2}=\mathbf{F}_{\text{BP}}(t)-2m\left\langle \dot
{p}\mathbf{\dot{R}}/p\right\rangle -m\left\langle \ddot{p}\mathbf{R}%
/p\right\rangle , \label{NewLangevinEqAv}%
\end{equation}
with $\mathbf{R}(t)\doteq\left\langle \mathbf{R}(t)\right\rangle ,p_{k}>0$,
and a dot represents the total time derivative; the last two terms on the
right side are due to temporal changes in $\left\{  p_{k}\right\}  $; they
vanish in EQ so that we obtain a simple equation of motion for the average
trajectory $\mathbf{R}(t)$ of $\mathbf{R}_{k}(t)$\ that is normally discussed
in the literature for the Langevin equation.

\subsubsection{Solving Eq. (\ref{NewLangevinEq}) \label{Sec-Soln-Eq-Motion}}

We will consider the simpler case by keeping $V$ constant so we do not have to
worry about the $PV$-work. Let $\mathbf{R}_{k}(0)$ and $\mathbf{V}_{k}(0)$ be
the initial values of $\mathbf{R}_{k}(t)$ and $\mathbf{V}_{k}(t)$,
respectively; let $T(0)$ be the initial value of $T$. It is convenient to
discretize the situation by dividing a predetermined time interval $\Delta t$,
over which we are interested in finding the solution, into $n$ nonoverlapping
intervals $\delta t_{l-1}=t_{l}-t_{l-1},l=1,\cdots,n,$ with $t_{0}=0$ and
$t_{n}=\Delta t$. We determine the initial value $\mathbf{F}_{k,\text{BP}}(0)$
of $\mathbf{F}_{k,\text{BP}}(t)$ using Eq. (\ref{MicroForce-Velocity}) and use
Eq. (\ref{IEQ-probabilities}) to determine the initial probability $p_{k}(0)$;
note that we must not consider the $P_{k}V$-term for this case. We now solve
Eq. (\ref{NewLangevinEq}) during $\delta t_{0}$\ to determine the next values
of $\mathbf{R}_{k}(t_{1})$, which is then used to determine the next values of
$\mathbf{F}_{k,\text{BP}}(t_{1})$ and $p_{k}(t_{1})$. We repeat these steps
$n$ times to obtain the solution over the interval $(t_{n},t_{0})=\Delta t$.

\subsubsection{Entropy and Temperature Changes}

The instantaneous macroentropy is given by $S\doteq%
%TCIMACRO{\tsum \nolimits_{k}}%
%BeginExpansion
{\textstyle\sum\nolimits_{k}}
%EndExpansion
p_{k}s_{k},s_{k}\doteq-\ln p_{k}$. Using $dp_{k}(t_{l-1})=p_{k}(t_{l}%
)-p_{k}(t_{l-1})$ obtained above, we determine $dS(t_{l-1})\doteq%
%TCIMACRO{\tsum \nolimits_{k}}%
%BeginExpansion
{\textstyle\sum\nolimits_{k}}
%EndExpansion
dp_{k}(t_{l-1})(s_{k}(t_{l-1})-1)$ and $dQ(t_{l-1})=T(t_{l-1})dS(t_{l-1})$\ in
the MNEQT. Using $d_{\text{e}}Q(t_{l-1})=\widetilde{C}(T_{0}-T(t_{l-1}))$,
where $\widetilde{C}$ is the heat capacity of $\widetilde{\Sigma}$, and
equating it with $T_{0}d_{\text{e}}S(t_{l-1})$, we determine $d_{\text{e}%
}S(t_{l-1})$, which is then used\ to determine the irreversible macroentropy
generation $d_{\text{i}}S(t_{l-1})=dS(t_{l-1})-d_{\text{e}}S(t_{l-1})$.

The temperature change during $\delta t_{l-1}$ is given in Theorem
\ref{Th-Temp-Change}. We thus have a complete MNEQT.

\subsection{Fluctuations\label{Sec-Fluctuations}}

Standard fluctuation theory \cite{Landau,Gujrati-Fluctuations,Mishin} deals
with EQ fluctuations where no internal variables are present. However, as we
have shown elsewhere \cite{Gujrati-Entropy2} and also discussed above, their
presence in IEQ states causes no new complications and we can just follow the
standard formulation to obtain \emph{instantaneous} fluctuations in
$\mathbf{F}_{\text{BP}},\mathbf{R,V}$ and $\mathbf{P}_{\text{BP}}$ when the
system is in an IEQ state involving the internal variable $\mathbf{F}%
_{\text{BP}}$ or $\mathbf{P}_{\text{BP}}$.

We restrict ourselves to a $1$-d case for simplicity ($R$ replaced by $X$).
The probability of fluctuations about the IEQ state \cite{Landau} is given by
$W_{0}\exp(-\beta\rho/2)$, where
\[
\rho=\Delta T\Delta S-\Delta P\Delta V+\Delta F_{\text{BP}}\Delta X
\]
in terms of various fluctuations from the IEQ state and $W_{0}$ is some
unimportant constant. As $\rho$ is a thermodynamic expression, we have the
liberty to chose $T,V$ and $F_{\text{BP}}$ as independent variables to express
$\rho$ in terms of $\Delta T,\Delta V$ and $\Delta F_{\text{BP}}$%
:$\rho=(\partial S/\partial T)(\Delta T)^{2}-(\partial P/\partial V)(\Delta
V)^{2}+2(\partial X/\partial T)\Delta T\Delta F_{\text{BP}}+(\partial
X/\partial F_{\text{BP}})(\Delta F_{\text{BP}})^{2}$ by exploiting some
Maxwell relations \cite{Gujrati-III}. The coefficients of fluctuations in the
$\Delta T$-$\Delta F_{\text{BP}}$ subspace define a $2\times2$ matrix
$\mathbf{M}$ from which we can determine various mean square fluctuations
\cite{Landau,Gujrati-Fluctuations,Mishin}. For the interesting mean square
fluctuation $\langle(\Delta\mathsf{F}_{\text{BP}})^{2}\rangle$, we obtain
\begin{equation}
\langle(\Delta\mathsf{F}_{\text{BP}})^{2}\rangle=T(\partial S/\partial
T)_{V,F_{\text{BP}}}/\mathit{M}, \label{F-Fluctuation0}%
\end{equation}
where $\mathit{M}\doteq$ $(\partial S/\partial T)(\partial X/\partial
F_{\text{BP}})-(\partial X/\partial T)^{2}\geq0$ is the determinant of
$\mathbf{M}$. From these fluctuations, we can determine any other fluctuation
such as $\langle(\Delta X)^{2}\rangle$. However, a simple method is to use
$T,V$ and $X$ as independent variables, which yields%
\[
\langle(\Delta\mathsf{X})^{2}\rangle=-T(\partial P/\partial V)_{T,X}%
/\mathit{M}^{\prime},
\]
where $\mathit{M}^{\prime}\doteq$ $-(\partial P/\partial V)(\partial
F_{\text{BP}}/\partial X)-(\partial F_{\text{BP}}/\partial V)^{2}\geq0$.

\subsection{Microwork Fluctuations}

The above NEQ fluctuation calculation is valid in general for small
fluctuations about some IEQ state and are by very nature Gaussian. To go
beyond the Gaussian form, we must expand to higher order, which we will not do
here as we are only interested in establishing the feasibility of the $\mu
$NEQT and the reproducibility of known results. As $\mathbf{F}_{k\text{,BP}}$
is specific to $\mathfrak{m}_{k}$, $d_{\text{i}}W_{k\text{,BP}}$ is not
affected by $p_{k}$; it is the same whether we consider an EQ or a NEQ state.
In the present case, $d_{\text{i}}W_{k\text{,BP}}\doteq-F_{k\text{,BP}}dX$
fluctuates around its average $d_{\text{i}}W_{\text{BP}}\doteq\left\langle
d_{\text{i}}\mathsf{W}_{\text{BP}}\right\rangle \geq0$. As the average
fluctuation $\langle(\Delta F_{\text{BP}})^{2}\rangle(dX)^{2}$ does not
necessarily vanish, $d_{\text{i}}W_{k\text{,BP}}$ takes values on both sides
of $d_{\text{i}}W_{\text{BP}}$. To understand its variation, we consider the
EQ state for which $d_{\text{i}}W_{\text{BP,eq}}\equiv0$ so that $d_{\text{i}%
}W_{k\text{,BP}}$ takes both positive and negative values around $0$. This
variation remains true even in a NEQ state; only $p_{k}$'s change.

Thus, we have finally established that there is no sign restriction. Having no
restriction on the sign of $d_{\text{i}}W_{k\text{,BP}}$ means that
$\mathbf{F}_{k\text{,BP}}$ may or may not oppose the motion for $\mathfrak{m}%
_{k}$. This is different from the Langevin approach. In the latter, the
deterministic force $\mathbf{F}_{k,\text{f}}(t)\doteq-\gamma\mathbf{v}%
_{k}(t)\neq0$ \emph{always} opposes the motion for every microstate
$\mathfrak{m}_{k}$; this is in accordance with the second law. Therefore,
$d_{\text{i}}W_{k\text{,f}}\doteq\mathbf{F}_{k,\text{f}}(t)\cdot d\mathbf{R}$
must generate some irreversible entropy $d_{\text{i}}S>0$; cf. Eqs.
(\ref{Irreversible Work}) and (\ref{diQ}). In our theory, $\mathbf{F}%
_{k\text{,BP}}$ is a mechanical force so it does not change $p_{k}$ and,
hence, the entropy. The other difference is the following. As $\mathbf{F}%
_{k,\text{f}}(t)$ is obtained from $\mathbf{F}_{k\text{,BP}}^{\prime}$ by
averaging over the fast Langevin force $\boldsymbol{\xi}$, it is analogous is
some crude sense to our deterministic microforce $\mathbf{F}_{k\text{,BP}}$
but the latter does not always oppose the motion. Recall that $\mathbf{F}%
_{k,\text{f}}(t)$ represents the slow component of the microforce on BP, while
$\mathbf{F}_{k\text{,BP}}$\ is the net microforce on BP. If we insist on using
the Langevin interpretation for $\gamma$, then this is equivalent to allowing
$\gamma$ to have both signs as is considered to be the case for active BPs
\cite{Marconi,Romanczuk,Kapral-2017,Fodor}.

\subsection{The Langevin Limit}

If $(\partial S/\partial F_{\text{BP}})_{T,V}=(\partial X/\partial
T)_{V,F_{\text{BP}}}$ can be neglected, the fluctuations in $T,V$ and
$F_{\text{BP}}$ become independent. In particular, $\langle(\Delta
\mathsf{F}_{\text{BP}})^{2}\rangle=T(\partial F_{\text{BP}}/\partial X)_{T,V}%
$. It follows from Eq. (\ref{V-F_Definition}) that $F_{\text{BP}}(S,V,X)$ is a
function of $X$, and using $F_{\text{BP}}\simeq-\gamma\dot{X}$ in the Langevin
limit, we find that $\partial F_{\text{BP}}/\partial X\simeq-\gamma\ddot
{X}/\dot{X}=\gamma^{2}/m$ so that
\begin{equation}
\left\langle (\Delta\mathsf{F}_{\text{BP}})^{2}\right\rangle \simeq
T_{0}\gamma^{2}/m>0, \label{F-Fluctuation}%
\end{equation}
which is precisely what we expect in this approximation since $(\Delta
F_{\text{BP}})^{2}=\gamma^{2}\dot{X}^{2}$ and $\langle\dot{X}^{2}\rangle$
$=T_{0}/m$ in EQ; see below. We can similarly obtain $\left\langle (\Delta
X)^{2}\right\rangle =T_{0}(\partial X/\partial F_{\text{BP}})_{T,V}%
=mT_{0}/\gamma^{2}>0$ and $\left\langle \Delta X\Delta\mathsf{F}_{\text{BP}%
}\right\rangle =T_{0}$. In a highly viscous environment, the mean square
CM-fluctuation becomes very small as expected, and $\langle(\Delta
F_{\text{BP}})^{2}\rangle$ become large. All these results are valid for any
BP of any reduced mass $m$ ranging from mesoscales to macroscales in this limit.

\subsection{Relative Velocity Fluctuations}

Integrating Eq. (\ref{InternalMicrowork-BP}) over an interval $(0,t)$, we have%
\begin{subequations}
\begin{equation}
\Delta_{\text{i}}W_{k\text{,BP}}=-%
%TCIMACRO{\tint \nolimits_{0}^{t}}%
%BeginExpansion
{\textstyle\int\nolimits_{0}^{t}}
%EndExpansion
\mathbf{F}_{k,\text{BP}}(t)\mathbf{\cdot V}_{k}(t)dt,
\label{MicroWork-KineticEnergy1}%
\end{equation}
where $\mathbf{V}_{k}(t)$\ is the relative velocity. Note that we do not need
$p_{k}$\ to calculate the microwork, which makes it trivial; see Sec.
\ref{Sec-BP-N=1}. We thus have the general result
\begin{equation}
\Delta_{\text{i}}W_{k\text{,BP}}=-(m/2)(\mathbf{V}_{k}^{2}(t)-\mathbf{V}%
_{k}^{2}(0)), \label{MicroWork-KineticEnergy2}%
\end{equation}
which can have any sign. The above result is an identity so it is not
restricted to small speeds only. As $\mathfrak{m}_{k}$ at $0$ and $t$ may have
different probabilities, we take the ensemble average using joint
probabilities to obtain%
\end{subequations}
\begin{subequations}
\begin{equation}
\Delta_{\text{i}}W_{\text{BP}}=(m/2)(\left\langle \mathsf{V}^{2}%
(0)\right\rangle -\left\langle \mathsf{V}^{2}(t)\right\rangle )\geq0.
\label{Irreversible-BP-Energy}%
\end{equation}

From the comments above about the close similarity between the IEQ and EQ
states, we conclude that the velocity distribution is given by the Maxwell
distribution at the instantaneous temperature $T_{\text{CM}}(t)$ of the
degrees of freedom associated with the CM-motion as described in Sec.
\ref{Sec-IEQ}. Thus, we have the conventional result
\begin{equation}
\left\langle \mathsf{V}^{2}(t)\right\rangle =3T_{\text{CM}}(t)/m
\label{Av-Velocity-Square}%
\end{equation}
for the BP in an IEQ state so that
\begin{equation}
\Delta_{\text{i}}W_{\text{BP}}=(3/2)(T_{\text{CM}}(0)-T_{\text{CM}}(t))\geq0,
\label{Irreversible-BP-Energy-Temperature}%
\end{equation}
showing that the slowing down of the CM-motion results in its temperature
falling as time goes on. Eventually, $T_{\text{CM}}(t)\rightarrow T_{0}$. In
EQ, the Brownian motion does not undergo any temperature change ($\Delta
_{\text{i}}W_{\text{BP}}=0$) as is the case for the Langevin equation, even
though $\mathbf{V}_{k}$ varies over $\mathfrak{m}_{k}$.

In the Langevin limit, the equation of motion for $\mathfrak{m}_{k}$ with a
time-dependent $\gamma_{k}(t)$ becomes (see Eq. (\ref{MicroForce-Example}) for
justification)
\end{subequations}
\begin{subequations}
\begin{equation}
d\mathbf{V}_{\text{$k$}}(t)/dt=-(\gamma_{k}(t)/m)\mathbf{V}_{k}(t),
\label{NewLangevinEq-micro}%
\end{equation}
whose solution is%
\begin{equation}
\mathbf{V}_{\text{$k$}}(t)=\mathbf{V}_{\text{$k$}}(0)\exp(-%
%TCIMACRO{\tint \nolimits_{0}^{t}}%
%BeginExpansion
{\textstyle\int\nolimits_{0}^{t}}
%EndExpansion
\gamma_{\text{$k$}}(u)du/m), \label{MicroVelocity}%
\end{equation}
which is independent of $p_{k}$. To be consistent with Eq.
(\ref{Av-Velocity-Square}), $\gamma_{k}(t)$ must not have a fixed sign over
$\left\{  \mathfrak{m}_{k}\right\}  $. This is consistent with the observation
that $\Delta_{\text{i}}W_{k\text{,BP}}$\ has no particular sign over $\left\{
\mathfrak{m}_{k}\right\}  $. This means that the components of the possible
velocities can range from $-\infty$ to $+\infty$ to satisfy Eq.
(\ref{Av-Velocity-Square}). Consequently,%
\end{subequations}
\begin{equation}
\left\langle \mathsf{V}^{2}(t)\right\rangle =\left\langle \mathsf{V}%
^{2}(0)\exp(-2%
%TCIMACRO{\tint \nolimits_{0}^{t}}%
%BeginExpansion
{\textstyle\int\nolimits_{0}^{t}}
%EndExpansion
\mathsf{\gamma}(u)du/m)\right\rangle \leq\left\langle \mathsf{V}%
^{2}(0)\right\rangle , \label{Av-V-Sq}%
\end{equation}
where the last inequality follows from Eq. (\ref{Irreversible-BP-Energy}) and
where the random variable $\mathsf{\gamma}$ has outcomes $\left\{  \gamma
_{k}\right\}  $. We thus see that our approach has allowed the equipartition
theorem to remain valid at all times. From $\left\langle \mathsf{V}%
^{2}(t)\right\rangle \propto1/m$, we conclude that larger the mass, smaller
the mean square fluctuations such as for a macroscopic piston. However, for a
mesoscopic Brownian particle, it can be appreciable and can be observed.

For the Langevin case $\gamma_{\text{$k$}}(u)\simeq\gamma>0$ for all $k$ so
that $\left\langle \mathsf{V}^{2}(t)\right\rangle =e^{-2\gamma t/m}%
\left\langle \mathsf{V}^{2}(0)\right\rangle \leq\left\langle \mathsf{V}%
^{2}(0)\right\rangle $, which is consistent with the above inequality but
shows that $\left\langle \mathsf{V}^{2}(t)\right\rangle \rightarrow0$ as
$t\rightarrow\infty$. This highlights another important difference from the
Langevin approach: in the $\mu$NEQT, $\gamma_{k}(t)$ has \emph{no} sign
restriction. Because of this, it cannot be taken out of the averaging process.
By taking it out in the Langevin case results in an incorrect answer. To see
it clearly, we evaluate $\Delta_{\text{i}}W_{\text{BP}}$ in Eq.
(\ref{Irreversible-BP-Energy}) to obtain $\Delta_{\text{i}}W_{\text{BP}%
}=(m/2)\left\langle \mathsf{V}^{2}(0)\right\rangle (1-e^{-2\gamma t/m})\geq0$.
As $t\rightarrow\infty$, $\Delta_{\text{i}}W_{\text{BP}}=(m/2)\left\langle
\mathsf{V}^{2}(0)\right\rangle >0$, while it must vanish in EQ as noted above.

\subsection{EQ Diffusion}

We now determine the average square displacement of the BP over a long time.
For the sake of simplicity, we will only consider diffusion in an EQ state as
considering IEQ states creates complications that we wish to avoid. We
consider the relative displacement $\Delta\mathbf{R}_{k}\mathbf{(}%
t,0\mathbf{)\doteq R}_{k}(t)-\mathbf{R}_{k}(0)$ over all $\left\{
\mathfrak{m}_{k}\right\}  $ at long time and follow Einstein\ again
\cite{Einstein-BrownianMotion}. The distribution function of the relative
displacement $\Delta\mathbf{R}_{k}=\Delta\mathbf{\mathbf{R}}_{k}(t,0)$ over
all $\left\{  \mathfrak{m}_{k}\right\}  $\ is given by $p_{k}(\Delta
\mathsf{R},t)=e^{-\Delta\mathsf{R}^{2}/4Dt}/(4\pi Dt)^{3/2}$, so that
\begin{equation}
\left\langle \Delta\mathsf{R}^{2}(t)\right\rangle =6Dt
\label{EinsteinRelation}%
\end{equation}
as a function of time; here, $D$ is the diffusion constant, which is related
to the viscosity of the fluid by $D=T/6\pi\eta a$. We can also compute
$\left\langle \Delta\mathsf{R}^{2}(t)\right\rangle $ from\ Eq.
(\ref{NewLangevinEq-micro}) in a standard way but we will not stop to do that.
\newline

\section{Computational Scheme \label{Sec-Example}}

To establish the feasibility of our theory, we describe the computational
methodology by considering a simpler version of the case studied in Sec.
\ref{Sec-GeneralCase}:\ $\Sigma_{\text{BP}}$ as the system $\Sigma$
(previously denoted by $\Sigma_{\text{BP}}$) in a medium $\widetilde{\Sigma}$,
which with $\Sigma$ forms the isolated system $\Sigma_{0}$ (as before). It is
this version that is normally considered in the literature, and its
computational scheme must be consistent with the $\mu$NEQT we have already
developed in the previous sections, except that $\Sigma_{\text{R}}$ is absent
in the current consideration. The BP contains $N_{\text{BP}}$\ particles, each
of mass $m_{\text{BP}}$ so that $M=N_{\text{BP}}m_{\text{BP}}$, and
$\widetilde{\Sigma}$ contains\ $\widetilde{N}$ particles, each of mass
$\widetilde{m}$, so that its total mass is $\widetilde{M}=\widetilde
{N}\widetilde{m}$. In addition, $V$ and $\widetilde{V}$\ are the volumes of
the BP and the medium, respectively, which we keep fixed along with
$N_{\text{BP }}$ and $\widetilde{N}$. Accordingly, we do not exhibit them as
parameters in the Hamiltonians. Our choice for the volumes means that there is
no "pressure-volume" work so the only microwork we need to consider is due to
the microforce $\mathbf{F}_{k,\text{BP}}$ resulting in $d_{\text{i}%
}W_{k\text{,BP}}$; note that now $k$ refers to the BP microstate
$\mathfrak{m}_{k}$. Earlier, $\mathfrak{m}_{k}$ refereed to the joint
macrostates of $\Sigma_{\text{BP}}$ and $\Sigma_{\text{R}}$. This simplifies
the computational complexity considerably.

We will take $\widetilde{\Sigma}$ to be in equilibrium as before in a
canonical ensemble with equilibrium temperature $T_{0}$
\begin{subequations}
\begin{equation}
\widetilde{p}_{\widetilde{k}}=\exp[\beta_{0}(\widetilde{F}-\widetilde
{E}_{\widetilde{k}})] \label{MicroProbability-Medium}%
\end{equation}
for its microstate $\widetilde{\mathfrak{m}}_{\widetilde{k}}$; here,
$\widetilde{F}$ is the thermodynamic potential (the Helmholtz free energy). We
take $\Sigma$, \textit{i.e.}, $\Sigma_{\text{BP}}$ in an internal equilibrium
with temperature $T$ and microforce $\mathbf{F}_{k,\text{BP}}$; its microstate
probability is%
\begin{equation}
p_{k}=\exp[\beta(\Phi-E_{k}+\mathbf{F}_{k,\text{BP}}\mathbf{\cdot R})],
\label{MicroProbability-BP}%
\end{equation}
cf. Eq. (\ref{IEQ-probabilities}), where $\Phi$ is the thermodynamic
potential, and $\mathbf{F}_{k,\text{BP}}$ is given in Eq.
(\ref{MicroForce-Velocity}) and determined below for the current case. As
noted earlier, $\beta_{0},\beta$ and $\mathbf{R}$ are the Lagrange multipliers
so they are not fluctuating over $\widetilde{\mathfrak{m}}_{\widetilde{k}}$
and $\mathfrak{m}_{k}$, as the case may be. In other words, they are
parameters and must be held \emph{fixed }as we use these probabilities to take
averages. However, we can also treat $\mathbf{F}_{\text{BP}}$ as the parameter
so that $\mathbf{R}_{k}$ is fluctuating; cf. Eq. (\ref{IEQ-probabilities0}).
However, we will not do so here.

\subsection{Various Frames}

We consider three different frames $K_{0},\widetilde{K}$, and $K$, in which
$\Sigma_{0},\widetilde{\Sigma}$, and $\Sigma$ are at rest, respectively, with
their CM's at the origins of the frames. These frames determine their
thermodynamic, \textit{i.e., \emph{internal} }energies $E_{0},\widetilde{E}$
and $E$, respectively. The Hamiltonian of $\Sigma_{0}$\ is denoted by
$\mathcal{H}_{0}(\left.  \mathbf{z}_{0}\mathbf{,}\widetilde{\mathbf{z}}%
_{0}\right\vert \mathbf{R}_{\text{BP}},\widetilde{\mathbf{R}})$ with
$\mathbf{z}_{0}=\left\{  \mathbf{z}_{0i}=(\mathbf{x}_{0i},\mathbf{p}%
_{0i})\right\}  _{i=1}^{N_{\text{BP}}}$ and $\widetilde{\mathbf{z}}%
_{0}=\left\{  \widetilde{\mathbf{z}}_{0j}=(\widetilde{\mathbf{x}}%
_{0j},\widetilde{\mathbf{p}}_{0j})\right\}  _{j=1}^{\widetilde{N}}$ referring
to the particles composing the BP\ and $\widetilde{\Sigma}$, respectively, and
$\mathbf{R}_{\text{BP}}$ and $\widetilde{\mathbf{R}}\ $denoting the
CM-displacements of the BP\ and $\widetilde{\Sigma}$, respectively, with
respect to $K_{0}$. In the following, we will exclusively use $i$ for a
BP-particle and $j$ for a medium particle. The Hamiltonian of $\Sigma$ with
respect to $K$ and of $\widetilde{\Sigma}$ with respect to $\widetilde{K}$ are
denoted by $\mathcal{H}(\left.  \mathbf{z}\right\vert \mathbf{R}_{\text{BP}})$
and $\widetilde{\mathcal{H}}(\left.  \widetilde{\mathbf{z}}\right\vert
\widetilde{\mathbf{R}})$, respectively (see below); here $\mathbf{z}=\left\{
\mathbf{z}_{i}=(\mathbf{x}_{i},\mathbf{p}_{i})\right\}  $ and $\widetilde
{\mathbf{z}}=\left\{  \widetilde{\mathbf{z}}_{j}=(\widetilde{\mathbf{x}}%
_{j},\widetilde{\mathbf{p}}_{j})\right\}  $ refer to $K$, and $\widetilde{K}$,
respectively. By definition,%
\end{subequations}
\begin{equation}%
\begin{array}
[c]{c}%
%TCIMACRO{\tsum \nolimits_{i=1}^{N_{\text{BP}}}}%
%BeginExpansion
{\textstyle\sum\nolimits_{i=1}^{N_{\text{BP}}}}
%EndExpansion
\mathbf{z}_{i}=0,%
%TCIMACRO{\tsum \nolimits_{j=1}^{\widetilde{N}}}%
%BeginExpansion
{\textstyle\sum\nolimits_{j=1}^{\widetilde{N}}}
%EndExpansion
\widetilde{\mathbf{z}}_{j}=0,\\
m_{\text{BP}}%
%TCIMACRO{\tsum \nolimits_{i=1}^{N_{\text{BP}}}}%
%BeginExpansion
{\textstyle\sum\nolimits_{i=1}^{N_{\text{BP}}}}
%EndExpansion
\mathbf{x}_{0i}+\widetilde{m}%
%TCIMACRO{\tsum \nolimits_{j=1}^{\widetilde{N}}}%
%BeginExpansion
{\textstyle\sum\nolimits_{j=1}^{\widetilde{N}}}
%EndExpansion
\widetilde{\mathbf{x}}_{0j}=0,\\%
%TCIMACRO{\tsum \nolimits_{i=1}^{N_{\text{BP}}}}%
%BeginExpansion
{\textstyle\sum\nolimits_{i=1}^{N_{\text{BP}}}}
%EndExpansion
\mathbf{p}_{0i}+%
%TCIMACRO{\tsum \nolimits_{j=1}^{\widetilde{N}}}%
%BeginExpansion
{\textstyle\sum\nolimits_{j=1}^{\widetilde{N}}}
%EndExpansion
\widetilde{\mathbf{p}}_{0j}=0.
\end{array}
\label{Frame-Sum-Rule}%
\end{equation}
The first sum in the last equation is $\mathbf{P}_{\text{BP}}$ and the second
sum is $\widetilde{\mathbf{P}}=-\mathbf{P}_{\text{BP}}$.

\subsection{Separating Center of mass Motion}

Let $(\mathbf{R}_{\text{BP}},\mathbf{V}_{\text{BP}})$ and $(\widetilde
{\mathbf{R}},\widetilde{\mathbf{V}})$ denote the (displacement, velocity) of
the CM's of $K$ and $\widetilde{K}$, respectively, with respect to $K_{0}$. We
have $\mathbf{R=R}_{\text{BP}}-\widetilde{\mathbf{R}}$ as the relative
displacement and $\mathbf{V=V}_{\text{BP}}-\widetilde{\mathbf{V}}$ as the
relative velocity of the BP relative to $\widetilde{\Sigma}$. Then%
\begin{align*}
\mathbf{x}_{0i}  &  \doteq\mathbf{x}_{i}+\mathbf{R}_{\text{BP}},\mathbf{p}%
_{0i}\doteq\mathbf{p}_{i}+m_{\text{BP}}\mathbf{V}_{\text{BP}},\\
\widetilde{\mathbf{x}}_{0i}  &  \doteq\widetilde{\mathbf{x}}_{i}%
+\widetilde{\mathbf{R}},\widetilde{\mathbf{p}}_{0i}\doteq\widetilde
{\mathbf{p}}_{i}+\widetilde{m}\widetilde{\mathbf{V}}.
\end{align*}

We can obtain $E_{0}$ in $K_{0}$ by adding to $\widetilde{E}+E$ the CM-kinetic
energies $E_{\text{CM}}\doteq\mathbf{P}_{\text{BP}}^{2}/2M+$ $\widetilde
{\mathbf{P}}^{2}/2\widetilde{M}=\mathbf{P}_{\text{BP}}^{2}/2m$ of the BP and
$\widetilde{\Sigma}$, and their mutual interaction energy $\widehat{U}$:%
\begin{subequations}
\begin{equation}
E_{0}=E_{\text{CM}}+\widetilde{E}+\widehat{U}+E; \label{TotalEnergy}%
\end{equation}
cf. Eq. (\ref{MacroEnergy-Relation}). It should be noted that because of no
pressure-volume work here, this interaction energy is responsible for the
viscous drag. As Eq. (\ref{TotalEnergy}) is a purely mechanical relation, it
also refers to a microstate $\mathfrak{m}_{0k_{0}}$ $(k_{0}=\widetilde
{k}\otimes k)$ of $\Sigma_{0}$ in terms of the microstates $\widetilde
{\mathfrak{m}}_{\widetilde{k}}$ and $\mathfrak{m}_{k}$ of the medium and the
BP, respectively. Thus, $E_{0},\widetilde{E},E$, and $\widehat{U}$ can be
replaced, respectively, by their microanalogs $E_{0k_{0}},\widetilde
{E}_{\widetilde{k}}$ and $E_{k}$, and $\widehat{U}_{k\widetilde{k}}$ or
$\widehat{U}_{k_{0}}$, while we can replace $E_{\text{CM}}$ by its microanalog
$E_{\text{CM,}k}$ or $E_{\text{CM,}\widetilde{k}}$. Thus, we have the identity%
\begin{equation}
E_{0k_{0}}=E_{\text{CM,}k}+E_{k}+\widehat{U}_{k\widetilde{k}}+\widetilde
{E}_{\widetilde{k}}; \label{Microenergy-Isolated}%
\end{equation}
cf. Eq. (\ref{MicroEnergy-Relation}). We obtain various macroenergies from
them:
\end{subequations}
\begin{subequations}
\begin{align}
E_{0}  &  =%
%TCIMACRO{\tsum \nolimits_{k_{0}}}%
%BeginExpansion
{\textstyle\sum\nolimits_{k_{0}}}
%EndExpansion
p_{k_{0}}E_{0k_{0}},\widehat{U}=%
%TCIMACRO{\tsum \nolimits_{k_{0}}}%
%BeginExpansion
{\textstyle\sum\nolimits_{k_{0}}}
%EndExpansion
p_{k_{0}}\widehat{U}_{k\widetilde{k}},\nonumber\\
\widetilde{E}  &  =%
%TCIMACRO{\tsum \nolimits_{\widetilde{k}}}%
%BeginExpansion
{\textstyle\sum\nolimits_{\widetilde{k}}}
%EndExpansion
p_{\widetilde{k}}\widetilde{E}_{\widetilde{k}}=%
%TCIMACRO{\tsum \nolimits_{k_{0}}}%
%BeginExpansion
{\textstyle\sum\nolimits_{k_{0}}}
%EndExpansion
p_{k_{0}}\widetilde{E}_{\widetilde{k}},\label{Microenergies}\\
E  &  =%
%TCIMACRO{\tsum \nolimits_{k}}%
%BeginExpansion
{\textstyle\sum\nolimits_{k}}
%EndExpansion
p_{k}E_{k}=%
%TCIMACRO{\tsum \nolimits_{k_{0}}}%
%BeginExpansion
{\textstyle\sum\nolimits_{k_{0}}}
%EndExpansion
p_{k_{0}}E_{k}.\nonumber
\end{align}
Let us introduce conditional probabilities $p(\widetilde{k}\mid k)$ of
$\widetilde{k}$ given $k$ so that $p_{k_{0}}=p_{k}p(\widetilde{k}\mid k)$. We
use them to determine microenergies $E_{0k}$ and $\widetilde{E}_{k}$ of
$\Sigma_{0}$\ and $\widetilde{\Sigma}$, respectively,\ that can be associated
with $\mathfrak{m}_{k}$:%
\begin{align}
E_{0k}  &  \doteq%
%TCIMACRO{\tsum \nolimits_{\widetilde{k}}}%
%BeginExpansion
{\textstyle\sum\nolimits_{\widetilde{k}}}
%EndExpansion
p(\widetilde{k}\mid k)E_{k_{0}},\nonumber\\
\widetilde{E}_{k}  &  \doteq%
%TCIMACRO{\tsum \nolimits_{\widetilde{k}}}%
%BeginExpansion
{\textstyle\sum\nolimits_{\widetilde{k}}}
%EndExpansion
p(\widetilde{k}\mid k)\widetilde{E}_{\widetilde{k}}, \label{RestrictedAverage}%
\\
E_{0}  &  =%
%TCIMACRO{\tsum \nolimits_{k}}%
%BeginExpansion
{\textstyle\sum\nolimits_{k}}
%EndExpansion
p_{k}E_{0k},\widetilde{E}=%
%TCIMACRO{\tsum \nolimits_{k}}%
%BeginExpansion
{\textstyle\sum\nolimits_{k}}
%EndExpansion
p_{k}\widetilde{E}_{k}.\nonumber
\end{align}
If the medium and the BP are quasi-independent, then $p(\widetilde{k}\mid
k)=p_{\widetilde{k}}$.

As $\widetilde{\Sigma}$ is in equilibrium, no irreversibility can be
associated with it. Accordingly, we ascribe the irreversibility to the BP
itself in the following. For this, we need to obtain the interaction energy
associated with the BP alone so we must average $\widehat{U}_{k,\widetilde{k}%
}$ over the medium (see the second equation in Eq. (\ref{RestrictedAverage}))
as below:
\begin{equation}
\widehat{U}_{k}\doteq%
%TCIMACRO{\tsum \nolimits_{\widetilde{k}}}%
%BeginExpansion
{\textstyle\sum\nolimits_{\widetilde{k}}}
%EndExpansion
p(\widetilde{k}\mid k)\widehat{U}_{k,\widetilde{k}}; \label{AdhesiveEnergy}%
\end{equation}
it is the interaction energy associated with $\mathfrak{m}_{k}$ for a given
macrostate of the medium. We can now extract $E_{0k}$ from Eq.
(\ref{Microenergy-Isolated}):%
\end{subequations}
\begin{equation}
E_{0k}=E_{\text{CM,}k}+E_{k}+\widehat{U}_{k}+\widetilde{E}_{k}. \label{E_0k}%
\end{equation}

We did not explicitly consider $\widehat{U}$ in Secs. \ref{Sec-MicroTH} and
\ref{Sec-ViscousDrag} as we were considering the microstates of $\Sigma
_{\text{BP}}$ and $\Sigma_{\text{R}}$\ together forming $\Sigma$. This
tremendously simplified our discussion there as $\widehat{U}_{k}$ was already
included in $E_{k}$. The situation is different here where we consider
$\Sigma_{\text{BP}}$ and $\widetilde{\Sigma}$ together but we wish to consider
the microstate $\mathfrak{m}_{k}$ of the BP alone so we need to deal with
$\widehat{U}_{k,\widetilde{k}}$ separately and determine $\widehat{U}_{k}$ in
our discussion.

\subsection{Hamiltonians}

We first focus on the BP. In its rest frame $K$, its Hamiltonian is given by%
\begin{subequations}
\begin{equation}
\mathcal{H}_{\text{BP}}(\mathbf{z})=%
%TCIMACRO{\tsum \nolimits_{i=1}^{N_{\text{BP}}}}%
%BeginExpansion
{\textstyle\sum\nolimits_{i=1}^{N_{\text{BP}}}}
%EndExpansion
\mathbf{p}_{i}^{2}/2m_{\text{BP}}+%
%TCIMACRO{\tsum \nolimits_{i,i^{\prime}=1}^{\prime N_{\text{BP}}}}%
%BeginExpansion
{\textstyle\sum\nolimits_{i,i^{\prime}=1}^{\prime N_{\text{BP}}}}
%EndExpansion
U(\mathbf{x}_{i}-\mathbf{x}_{i^{\prime}}), \label{BP-Hamiltonian-Ex}%
\end{equation}
where $U(\mathbf{x}_{i}-\mathbf{x}_{i^{\prime}})$ is the potential energy
between BP particles at $\mathbf{x}_{i}$ and $\mathbf{x}_{i^{\prime}}$. The
prime over the summation implies $i\neq i^{\prime}$. The Hamiltonian when
applied to $\mathfrak{m}_{k}$ determines $E_{k}$ so that $E=\left\langle
\mathcal{H}_{\text{BP}}\right\rangle =\left\langle \mathsf{E}\right\rangle $
with microstate probabilities $p_{k}$ given above.

The Hamiltonian of $\widetilde{\Sigma}$ in its rest frame $\widetilde{K}$ is
given by%
\begin{equation}
\widetilde{\mathcal{H}}(\widetilde{\mathbf{z}})=%
%TCIMACRO{\tsum \nolimits_{j=1}^{\widetilde{N}}}%
%BeginExpansion
{\textstyle\sum\nolimits_{j=1}^{\widetilde{N}}}
%EndExpansion
\widetilde{\mathbf{p}}_{j}^{2}/2\widetilde{m}+%
%TCIMACRO{\tsum \nolimits_{j,j^{\prime}=1}^{\prime\widetilde{N}}}%
%BeginExpansion
{\textstyle\sum\nolimits_{j,j^{\prime}=1}^{\prime\widetilde{N}}}
%EndExpansion
\widetilde{U}(\widetilde{\mathbf{x}}_{j}-\widetilde{\mathbf{x}}_{j^{\prime}}),
\label{Medium-Hamiltonian}%
\end{equation}
where $\widetilde{U}(\widetilde{\mathbf{x}}_{j}-\widetilde{\mathbf{x}%
}_{j^{\prime}})$ is the potential energy between the particles of
$\widetilde{\Sigma}$ at $\widetilde{\mathbf{x}}_{j}$ and $\widetilde
{\mathbf{x}}_{j^{\prime}}$. It gives the microstate energy $\widetilde
{E}_{\widetilde{k}}$ of $\widetilde{\mathfrak{m}}_{\widetilde{k}}$ so that
$\widetilde{E}=\left\langle \widetilde{\mathcal{H}}\right\rangle
_{\widetilde{\Sigma}}=\left\langle \widetilde{\mathsf{E}}\right\rangle
_{\widetilde{\Sigma}}$ with microstate probabilities $\widetilde
{p}_{\widetilde{k}}$ given above.

In terms of these Hamiltonians, $\mathcal{H}_{0}$ is given by%
\begin{align}
\mathcal{H}_{0}  &  =\mathcal{H}_{\text{BP}}(\left.  \mathbf{z}_{0}\right\vert
\mathbf{R}_{\text{BP}})+\widetilde{\mathcal{H}}(\left.  \widetilde{\mathbf{z}%
}_{0}\right\vert \widetilde{\mathbf{R}})\nonumber\\
&  +%
%TCIMACRO{\tsum \nolimits_{i=1}^{N_{\text{BP}}}}%
%BeginExpansion
{\textstyle\sum\nolimits_{i=1}^{N_{\text{BP}}}}
%EndExpansion%
%TCIMACRO{\tsum \nolimits_{j=1}^{\widetilde{N}}}%
%BeginExpansion
{\textstyle\sum\nolimits_{j=1}^{\widetilde{N}}}
%EndExpansion
\widehat{U}(\mathbf{x}_{0i}-\widetilde{\mathbf{x}}_{0j}),
\label{IsolatedSys-Hamiltonian}%
\end{align}
in the $K_{0}$ frame. Here, $\widehat{U}(\mathbf{x}_{0i}-\widetilde
{\mathbf{x}}_{0j})$ denotes the mutual potential energy between a BP particle
at $\mathbf{x}_{0i}$ and a medium particle at $\widetilde{\mathbf{x}}_{0j}$.
The mutual interaction between the BP and $\widetilde{\Sigma}$ is described by
the last term above.

\subsection{Frame Change}

We need to express the interaction energy explicitly in terms of $\mathbf{R}$;
cf. Eqs. (\ref{V-F_Definition}) and (\ref{MicroForce-Velocity}). This will be
needed below; see Eq. (\ref{MicroForce-Example}). Using the identity
$\mathbf{x}_{0i}-\widetilde{\mathbf{x}}_{0j}\mathbf{=x}_{i}-\widetilde
{\mathbf{x}}_{j}+\mathbf{R}$, we rewrite $\widehat{U}(\mathbf{x}%
_{0i}-\widetilde{\mathbf{x}}_{0j})=\widehat{U}(\mathbf{x}_{i}-\widetilde
{\mathbf{x}}_{j}+\mathbf{R})$; here, $\mathbf{x}_{i}$ is defined in the $K$
frame and $\widetilde{\mathbf{x}}_{j}$ is defined in the $\widetilde{K}$ frame
so that we can manipulate this energy conveniently as required below. The last
sum in Eq. (\ref{IsolatedSys-Hamiltonian}) is the potential energy between the
BP in a given\ microstate $\mathfrak{m}_{k}$ and the medium in a given
microstate $\widetilde{\mathfrak{m}}_{\widetilde{k}}$, and defines
$\widehat{U}_{k\widetilde{k}}$. To obtain $\widehat{U}_{k}$, we need to
average it according to Eq. (\ref{AdhesiveEnergy}) using $p(\left.
\widetilde{k}\right\vert k)$. While this can be done, for computational
simplicity here, we will average using $\widetilde{p}_{\widetilde{k}}$, the
canonical distribution of $\widetilde{\Sigma}$ in its rest frame, given above,
which we denote by $\left\langle {}\right\rangle _{\widetilde{\Sigma}}$. It is
a \emph{restricted} average and gives us%
\end{subequations}
\begin{equation}
\widehat{U}_{k}(T_{0},\mathbf{R})\doteq%
%TCIMACRO{\tsum \nolimits_{i=1}^{N_{\text{BP}}}}%
%BeginExpansion
{\textstyle\sum\nolimits_{i=1}^{N_{\text{BP}}}}
%EndExpansion
\left\langle
%TCIMACRO{\tsum \nolimits_{j=1}^{\widetilde{N}}}%
%BeginExpansion
{\textstyle\sum\nolimits_{j=1}^{\widetilde{N}}}
%EndExpansion
\widehat{U}(\mathbf{x}_{i}+\mathbf{R}-\widetilde{\mathbf{x}}_{j})\right\rangle
_{\widetilde{\Sigma}}, \label{MicroPotential}%
\end{equation}
for $\mathfrak{m}_{k}$ determined by $\left\{  \mathbf{x}_{i}\right\}  $; see
the definition of $\widetilde{E}_{k}$ in Eq. (\ref{RestrictedAverage}); the
dependence on $T_{0}$ is due to the above averaging.

\subsection{Determination of $\mathbf{F}_{\text{BP}}$ and $\mathbf{F}%
_{k,\text{BP}}$\label{Sec-Macro-MicroForces}}

The average of $\widehat{U}_{k}(T_{0},\mathbf{R})$ over $p_{k}$ determines the
macroscopic potential
\begin{equation}
\widehat{U}(T,T_{0},u)=\left\langle \widehat{\mathsf{U}}(T_{0},\mathbf{R}%
)\right\rangle , \label{Medium-Averaging-U}%
\end{equation}
where $\widehat{\mathsf{U}}(T_{0},\mathbf{R})$ is the random variable with
outcomes $\left\{  \widehat{U}_{k}(T_{0},\mathbf{R})\right\}  $, $T$ appears
due to the ensemble averaging, and $u$ is defined in Eq.
(\ref{ScalarCombination}) as we now explain. This average is the $\widehat{U}$
in $E_{0}$ in Eq. (\ref{TotalEnergy}). Using $E_{0}$ for $E$ in Eq.
(\ref{V-F_Definition}) and recognizing that only $\widehat{U}$ depends on
$\mathbf{R}$, we realize that this dependence must be through $u$ as explained
in Sec. \ref{Sec-ViscousDrag}. We thus find that the macroforce $\mathbf{F}%
_{\text{BP}}$ is given by%
\begin{equation}
\mathbf{F}_{\text{BP}}(t)\mathbf{=V}\frac{\partial\widehat{U}(T,T_{0}%
,u)}{\partial u},\frac{\partial\widehat{U}(T,T_{0},u)}{\partial u}<0,
\label{MacroForce-Example}%
\end{equation}
in which $\partial\widehat{U}(T,T_{0},u)/\partial u$ represent
$(-f(T,\mathbf{V},t))$ in Eq. (\ref{FrictionForm}) for the current case; cf.
Eq. (\ref{General-friction-f}).

We now use Eq. (\ref{MicroForce-Velocity}) by replacing $E_{k}$ there with
$E_{0k}$. We obtain%
\begin{equation}
\mathbf{F}_{k,\text{BP}}(t)=\frac{\partial\widehat{U}_{k}(T_{0},\mathbf{R}%
)}{\partial\mathbf{R}} \label{MicroForce-Example}%
\end{equation}
This determines the microforce thermodynamically.

As we have already discussed the CM-motion of the BP, we can go back to the
$K$ frame and consider $\mathcal{H}_{\text{BP}}(\mathbf{z})$ to write down the
equations of motions for $\mathbf{x}_{i}$ and $\mathbf{p}_{i}$ for
$\mathfrak{m}_{k}$,%
\begin{equation}
\frac{d\mathbf{p}_{i}}{dt}=-%
%TCIMACRO{\tsum \nolimits_{i^{\prime}}^{\prime N_{\text{BP}}}}%
%BeginExpansion
{\textstyle\sum\nolimits_{i^{\prime}}^{\prime N_{\text{BP}}}}
%EndExpansion
\frac{\partial U(\mathbf{x}_{i}\mathbf{-x}_{i^{\prime}})}{\partial
\mathbf{x}_{i}},m_{\text{BP}}\frac{d\mathbf{x}_{i}}{dt}=\mathbf{p}%
_{i},\label{InternalMotionBP}%
\end{equation}
dealing only with internal forces. These equations determine how
$\mathfrak{m}_{k}$ evolves in time and determine the evolution of $\left\{
\mathbf{z}_{i}\right\}  $ in time in the $K$ frame. From this, we extract
$\left\{  \mathbf{x}_{i}(t)\right\}  $ to be used in Eq.
(\ref{Medium-Averaging-U}) to determine $\widehat{U}_{k}(T_{0},\mathbf{R})$.
We then determine $\mathbf{F}_{k,\text{BP}}(t)$ and follow the prescription of
the solution of Eq. (\ref{NewLangevinEq}) over $\Delta t$.

Note that we do not need to solve the Hamilton's equations for the particles
in $\widetilde{\Sigma}$, which provides a major simplification of our
approach. The stochasticity, as we have mentioned several times, emerges and
is completely captured when we average over $\mathfrak{m}_{k}$ using $p_{k}$
from Eq. (\ref{MicroProbability-BP}).

The BP equation of motion in Eq. (\ref{NewLangevinEq}) differs from the
original Langevin equation in Eq. (\ref{LangevinEquation0}) in that it is
missing the partitioning shown in Eq. (\ref{LangevinEquation}). Since the
Hamiltonian in Eq. (\ref{IsolatedSys-Hamiltonian}) does not have any
stochasticity, $\mathbf{F}_{k,\text{BP}}(t)$ above cannot be compared with the
$\mathbf{F}_{k,\text{BP}}^{\prime}(t)$ in Eq. (\ref{LangevinEquation0}). It
follows from Eq. (\ref{NewLangevinEq} ) that $\mathbf{V=V}_{k}$ in a given
BP-microstate $\mathfrak{m}_{k}$\ is a slowy-varying function that is
differentiable. The random fluctuations in it are described by considering it
over microstates. Thus, the fast fluctuations similar but not identical to
$\boldsymbol{\xi}$ are captured when we consider the ensemble $\left\{
\mathbf{V}_{k}\right\}  $ or $\left\{  \mathbf{F}_{k,\text{BP}}\right\}
$\ over $\left\{  \mathfrak{m}_{k}\right\}  $. The fluctuations in the medium
are not relevant in our approach as we have performed an average over
$\widetilde{\mathfrak{m}}_{\widetilde{k}}$ for reasons explained above.

We can use well-established solution and surface thermodynamic theories
\cite{Israelachvili} to determine useful forms of $\widehat{U}$, which can be
useful to validate any approximation, if any, is made in the evaluation of
$\widehat{U}_{k}$. In most cases, the interaction potential can be
approximated by considering $N_{\text{BP}}^{\text{(s)}}$ and $\widetilde
{N}^{\text{(s)}}$\ particles in a thin interface surrounding the BP-surface
and not all the particles. This will provide a further simplification in the
calculation. The averaging in Eq. (\ref{Medium-Averaging-U}) is limited to
only the thin interface and not the entire medium. Indeed, Einstein used this
interface to determine the osmotic force in his analysis of the Brownian
motion. Other sophisticated techniques may also be useful to deal with the
computation \cite{Frenkel,Woodcock,Bussi,Andersen,Nose}.

\subsection{A BP with $N_{\text{BP}}=1$\label{Sec-BP-N=1}}

Let us consider the simplest possible case $N_{\text{BP}}=1$ that is commonly
studied; see for example \cite{Kapral0}. In this case, Eq.
(\ref{InternalMotionBP}) is meaningless and $\widehat{U}_{k}$ depends only on
a single phase point $\mathbf{z}=0$ of the BP, which also refers to its CM.
The BP-microstate $\mathfrak{m}_{k}$\ is now the small cell $\delta\mathbf{z}$
around $\mathbf{z}$ in a $6$-dimensional phase space. With respect to the CM
of $\widetilde{\Sigma}$, the BP has a relative displacement $\mathbf{R}$ and a
relative velocity $\mathbf{V}_{k}$. We determine $\widehat{U}_{k}%
(T_{0},\mathbf{R})$ for general $\mathbf{R}$. This has to be done once. We now
determine $\mathbf{F}_{k,\text{BP}}(t)$ using Eq. (\ref{MicroForce-Example})
and $\mathbf{F}_{\text{BP}}(t)$\ for all possible values of $\mathbf{R}$. We
then integrate the equation of motion using basic calculus.

As an example, let us assume that $\widehat{U}_{k}(T_{0},\mathbf{R})$ is given
by
\begin{equation}
\widehat{U}_{k}(T_{0},\mathbf{R})=-\gamma_{k}(T_{0},t)\mathbf{R}%
(t)\cdot\mathbf{V}_{k}(t),\label{N_BP=1-U-Ex}%
\end{equation}
in terms of the scalar product; cf. Sec. \ref{Sec-ViscousDrag}. This results
in
\begin{subequations}
\begin{equation}
\mathbf{F}_{k,\text{BP}}(t)=-\gamma_{k}\mathbf{V}_{k}(t),d_{\text{i}%
}W_{k,\text{BP}}(t)=\gamma_{k}\mathbf{V}_{k}^{2}(t)dt.\label{N_BP=1-F-Ex}%
\end{equation}
Comparing with Eq. (\ref{NewLangevinEq-micro}), we see that $\mathbf{V}%
_{k}(t)$ after basic integration is given by Eq. (\ref{MicroVelocity}). Note
again that we do not need to consider $\widetilde{\Sigma}$ in obtaining the
solution, which demonstrate the usefulness of the new theory. The macroforce
$\mathbf{F}_{\text{BP}}(t)=\left\langle -\mathsf{\gamma}_{k}(T_{0}%
,t)\mathsf{V}_{k}(t)\right\rangle $ must follow the form for viscous drag in
Eq. (\ref{MacroForce-Example}) so we have%
\begin{equation}
\mathbf{F}_{\text{BP}}(t)=-\gamma_{\text{eff}}\mathbf{V}(t),d_{\text{i}%
}W_{\text{BP}}(t)=\gamma_{\text{eff}}\mathbf{V}^{2}(t)dt,\gamma_{\text{eff}%
}>0,\label{N_BP=1-F-eff-Ex}%
\end{equation}
where $d\mathbf{R}(t)/dt=\mathbf{V}(t)\doteq\left\langle \mathsf{V}%
\right\rangle $, and $\gamma_{\text{eff}}$ is an effective parameter defined
by $\gamma_{\text{eff}}(T_{0},t)\mathbf{V}^{2}(t)\doteq\left\langle
\mathsf{\gamma}(T_{0},t)\mathsf{V}^{2}(t)\right\rangle $.

\subsection{Returning to the BP in $\Sigma$}

We now return to the earlier case of the BP as a part of $\Sigma$. In this
case, $\widehat{U}_{k}(\mathbf{R})\ $is the analog of $\widehat{U}_{k}%
(T_{0},\mathbf{R})$ for the microstate $\mathfrak{m}_{k}$ of $\Sigma$. Its
ensemble average $\widehat{U}(T,u)\doteq\left\langle \widehat{U}%
_{k}(\mathbf{R})\right\rangle $ over $\mathfrak{m}_{k}$ is subsumed into $E$
in Eq. (\ref{V-F_Definition}); the dependence on $\mathbf{R}$ appears through
$u$ as explained in Sec. \ref{Sec-ViscousDrag}. Because of this, there is no
reason to extract it from $E$ so we identify the viscous force $\mathbf{F}%
_{\text{BP}}(t)$ by differentiating $E$ with respect to $u$; see Eq.
(\ref{General-friction-f}). This macroforce is for the entire system $\Sigma$
and not for just $\Sigma_{\text{BP}}$. Thus, the microforce $\mathbf{F}%
_{k,\text{BP}}(t)$ is for $\Sigma$'s microstate $\mathfrak{m}_{k}$. While we
did not do, it is possible to extract the microforce associated with a
BP-microstate by the method presented in this section.

\section{Discussion and Conclusions}

The present work was motivated by a desire to obtain a deterministic equation
of motion of a BP in microstate $\mathfrak{m}_{k}$ by considering a microstate
thermodynamics ($\mu$NEQT) in order to provide an alternative to the
stochastic Langevin approach that contains the original phenomenological
Langevin equation of motion and the more advanced generalized Langevin
equations such as due to Zwanzig and Mori \cite{Evans-book,Zwanzig}. The
central concept in the latter approaches is the partition of the microforce
acting on a BP into fast and slow components, which finds its formal
justification in the Mori-Zwanzig approach \cite{Evans-book,Zwanzig}. The
stochasticity due to the fast component is determined by the conditional
probability of $\boldsymbol{\xi}$ and the stochasticity of the slow component
is determined by the initial conditions.

In contrast, we have adopted a hybrid approach to derive the equation of
motion of the BP, in the spirit of Langevin, by following not his mechanical
approach but a thermodynamic approach based on the energy $E$, which
generalizes the one adopted by Einstein \cite{Einstein-BrownianMotion}.
Instead of focussing only on the BP and its diffusion in a medium (which we
treat in Sec. \ref{Sec-Example}), we take a comprehensive first-principle
approach to consider the BP as a part of a system $\Sigma$, embedded in a
medium $\widetilde{\Sigma}$; see Fig. \ref{Fig_System}. All of them form the
isolated system $\Sigma_{0}$. We consider\ the rest frame $K_{0}$ in which
$\Sigma_{0},\widetilde{\Sigma}$, and $\Sigma$ are at rest (except in Sec.
\ref{Sec-Example}), but we allow the BP and $\Sigma_{\text{R}}$\ (replaced by
$\widetilde{\Sigma}$ in Sec. \ref{Sec-Example}) to have a relative motion
specified by $\mathbf{R}$\ and $\mathbf{V}$, necessary to describe the process
of viscous drag as we find that the \emph{relative} motion between
$\Sigma_{\text{BP}}$ and $\Sigma_{\text{R}}$ (or $\widetilde{\Sigma}$) is the
source of viscous dissipation. We treat $\mathbf{R}$ as a parameter and
consider the energy $E(S,V,\mathbf{R})$, from which we determine the viscous
force $\mathbf{F}_{\text{BP}}$ as it opposes motion in accordance with the
second law. After identifying this macroforce in our thermodynamic approach,
we go a step further and obtain a deterministic equation of motion of the
microstate $\mathfrak{m}_{k}$\ of $\Sigma$, which was \ not the focus of
Einstein.\ Later in Sec. \ref{Sec-Example}, we return to the simple system of
a BP in a medium, and obtain the deterministic equation of motion of the
BP-microstate $\mathfrak{m}_{k}$ (not to be confused with $\mathfrak{m}_{k}%
$\ of $\Sigma$ discussed above). There we establish that the mutual
interaction energy $\widehat{U}$ between them is the source of viscous dissipation.

We accomplish our goal by developing a $\mu$NEQT that deals with each
microstate individually. At this level, the microworks are done at fixed
microstate probabilities $p_{k}$ so evaluating them is simplified. The
inherent determinism comes from the fact that the Hamiltonians and the
Hamiltonian equations have no randomness and apply directly to the microstates
individually. At this level, the potential energies in the Hamiltonian
determine various microforces including the one ($\mathbf{F}_{k,\text{BP}}$),
see Eq. (\ref{MicroForce-Example}), responsible for viscous dissipation in
terms of the macroforce $\mathbf{F}_{\text{BP}}$. We do not partition
$\mathbf{F}_{k,\text{BP}}$ into slow and fast components as required in the
Langevin approach. This is one of the distinctions between the two approaches. \ \ \ \ \ \ \ \ \ \ \ \ \ \ \ \ \ 

The stochasticity and the second law emerges automatically in our approach
when the ensemble average over microstates is taken as is standard in
statistical thermodynamics. To obtain the $\mu$NEQT, we need to uniquely
extract from the MNEQT a description suitable at the microstate level. We have
introduced the $\mu$NEQT a while back
\cite{Gujrati-II,Gujrati-Heat-Work0,Gujrati-Heat-Work,Gujrati-Entropy2}. It is
a first-principles theory and its main purpose here is to study BP in NEQ
situations, where the Langevin equation in Eqs. (\ref{LangevinEquation0}) and
(\ref{LangevinEquation}) are inapplicable. In order to have a well-defined NEQ
temperature $T$ of the system, we need to assume the system to be in an IEQ
state requiring internal variables; cf. Sec. \ref{Sec-IEQ}. For simplicity, we
have considered a single internal variable $\mathbf{P}_{\text{BP}}$ or
$\mathbf{R}$ in this study, which along with $V$ are independent state
variables in the state space $\mathfrak{S}^{\prime}$.

The deterministic equation of motion in Eq. (\ref{NewLangevinEq}) for the
microforce is easy to solve as described in Sec. \ref{Sec-Soln-Eq-Motion}. We
do not need the sophisticated concepts like the Wiener process, It\^{o} and
Stratonovich\ integrals, etc. This is a benefit of adopting the $\mu$NEQT. The
method of solution does not require knowing any interaction with
$\widetilde{\Sigma}$; it only requires interactions between $\Sigma
_{\text{BP}}$ and $\Sigma_{\text{R}}$. This becomes very important in Sec.
\ref{Sec-Example} that we will discuss below. As the second law is inoperative
at the microstate level, the microforce $\mathbf{F}_{k,\text{BP}}$ does not
always oppose motion in our approach; that holds only for the macroforce
$\mathbf{F}_{\text{BP}}$. The uniquely defined microforces $\left\{
\mathbf{F}_{k,\text{BP}}\right\}  $ and microworks $\left\{  d_{\text{i}%
}W_{k,\text{BP}}=-\mathbf{F}_{k,\text{BP}}\centerdot d\mathbf{R}\right\}  $
done by them become, as expected, fluctuating quantities over $\left\{
\mathfrak{m}_{k}\right\}  $. We make no assumptions about the nature of these
fluctuations as is needed for the stochastic forces in the Langevin approach.
The internal microwork $d_{\text{i}}W_{k,\text{BP}}\ $also has no fixed sign.
However, $d_{\text{i}}W_{k,\text{BP}}$ and the change in the kinetic energy
$E_{k,\text{CM}}=m\mathbf{V}_{k}^{2}/2$ satisfy \ \ \ \ \ \ \ \ \ \
\end{subequations}
\[
d_{\text{i}}W_{k,\text{BP}}+dE_{k,\text{CM}}=0,
\]
as seen from Eq. (\ref{V-F_BP-Relation}); see also Eq.
(\ref{MicroWork-KineticEnergy2}). This is expected at the microstate level
where classical mechanics operates. Some microforces increase $E_{k,\text{CM}%
}$; some decrease it. Things change at the macroscopic level after ensemble
averaging as seen from Eq. (\ref{Irreversible-BP-Energy}). Now, $d_{\text{i}%
}W_{\text{BP}}\geq0$ is constrained by the second law so the macroforce always
opposes motion.

Within the MNEQT, we determine fluctuations in various thermodynamically
relevant random variables. We limit ourselves to only second order in
expansion so we have only Gaussian fluctuations to reproduce all known EQ
results such as the Einstein relation in Eq. (\ref{EinsteinRelation}). We need
to go to higher order in expansion to obtain non-Gaussian fluctuations, but
the machinery is there. We show that the equipartition theorem is satisfied at
all times as the system is in IEQ, except that the degrees of freedom
associated with the CM-motion have their own temperature $T_{\text{CM}}$,
which may be different from $T$ or $T_{0}$; see the discussion in Sec.
\ref{Sec-IEQ}. This temperature always decreases as the CM-motion ceases as
the system (or the BP) comes to equilibrium; cf. Eq.
(\ref{Irreversible-BP-Energy-Temperature}). Therefore, it cannot be $T$, which
can either go up, down or remain unchanged depending on how it relates to
$T_{0}$. Even if $T=T_{0}$, $T_{\text{CM}}$ will continue to decrease if
$\mathbf{P}_{\text{BP}}\neq0$. We also obtain other results such as a complex
velocity-dependent microscopic friction coefficient $\gamma_{k}$ or the
internal microwork $\Delta_{\text{i}}W_{k,\text{BP}}$, both of which can be of
either sign.

An important aspect of the $\mu$NEQT approach should be mentioned. As
$\mathbf{F}_{k,\text{BP}}$ is oblivious to $p_{k}$, it does not change whether
we are dealing with an EQ case or a NEQ case; the latter are specified only by
$p_{k}$'s. Thus, we can determine $\mathbf{F}_{k,\text{BP}}$ in an EQ
situation, but use it in a NEQ situation by merely using the NEQ $p_{k}$'s. In
EQ, $\mathbf{F}_{\text{BP}}=0$, while in a NEQ case, $\mathbf{F}_{\text{BP}%
}\neq0$. But the fluctuations in $\left\{  \mathbf{F}_{k,\text{BP}}\right\}
$\ are present in both cases.\ The same discussion also applies to $\left\{
d_{\text{i}}W_{k,\text{BP}}\right\}  $.\ \ \ \ \ \ \ 

We have taken $V$ to be some generic work parameter and that the irreversible
macroworks $d_{\text{i}}W_{V}=(P-P_{0})dV$ and $d_{\text{i}}W_{\text{BP}}$
were treated as independent, which allowed us to get the two inequalities in
Eq. (\ref{SecondLawConsequences}). For a genuine piston problem in which the
volume $V$ changes due to piston displacement $dX_{\text{BP}}$, they are not
independent. Indeed, the force imbalance $P-P_{0}$ causes the friction force,
which eventually ensures $P=P_{0}$ in EQ as is well known; see for example
Ref. \cite{Gujrati-I,Gujrati-II}. Recognizing that $dV=AdX_{\text{BP}%
}=AmdX/M_{\text{BP}}$, where $A$ is the area of the piston, we must have
$F_{\text{BP}}=-(P-P_{0})Am/M_{\text{BP}}$ in the $1$-d case. We thus see that
the standard piston behaves as a BP undergoing Brownian motion. A similar
discussion can be carried out for the particle in Fig. (b), which also
undergoes Brownian motion.

The choice of using only IEQ states should not be taken as a limitation of the
$\mu$NEQT. As the system gets farther and farther away from the EQ state, we
need more and more of the independent internal variables, which requires a
larger and larger state space $\mathfrak{S}$ in which the IEQ states are
defined \cite{Gujrati-II,Gujrati-Hierarchy}. This requires a trivial extension
of the present approach. We may also need to treat the system as inhomogeneous
(see \cite{Gujrati-II} for details). Thus, the $\mu$NEQT is capable of
describing any complex NEQ state. The challenge is to identify additional
internal variables.

We have discussed the feasibility of the new theory in Sec. \ref{Sec-Example}
by considering a simple version of the problem often studied in the
literature: a single BP consisting of $N_{\text{BP}}$ particles in a medium
consisting of $\widetilde{N}$ particles. We treat a NEQ macrostate of the BP
by having a temperature difference between the BP and the medium and a
relative motion between them. Here, we relate the microforce $\mathbf{F}%
_{k,\text{BP}}$ to the mutual interaction $\widehat{U}_{k}$ associated with
the BP-microstate $\mathfrak{m}_{k}$. The discussion is very general and
$\widehat{U}_{k}$ includes the mutual interaction between all pairs of medium
and BP particles, except that we average it over all microstates of the
medium; see Eq. (\ref{Medium-Averaging-U}). Thus, $\widehat{U}_{k}$ depends
only on $N_{\text{BP}}$ as it only depends on the microstate $\mathfrak{m}%
_{k}$ of the BP alone. Once this "average" potential is obtained as a function
of $\mathbf{R}$, we do not need to worry about the particles of $\widetilde
{\Sigma}$. We can use $\widehat{U}_{k}$ for $\mathfrak{m}_{k}$, regardless of
whether the BP is in EQ or not with respect to the medium; the latter is
controlled by $p_{k}$. This is the same conclusions we had arrived at for
$\left\{  \mathbf{F}_{k,\text{BP}}\right\}  $\ and $\left\{  d_{\text{i}%
}W_{k,\text{BP}}\right\}  $ above.

A good approximation for $\widehat{U}_{k}$ will be obtained by limiting the
mutual interactions between particles in a thin interface between the BP and
the medium. This is the approximation used by Einstein who used the osmotic
pressure across it to develop his theory. In Sec. \ref{Sec-BP-N=1}, we
consider the case of a BP made up of a single particle ($N_{\text{BP}}=1$) in
a medium so $\mathfrak{m}_{k}$ refers to a single particle in a $6$%
-dimensional phase space. Here, the simplicity of our approach becomes
obvious. Once $\widehat{U}_{k}$ has been obtained, the solution of the
equation of motion requires only following the single particle. A simple model
given in Eq. (\ref{N_BP=1-U-Ex}) clarifies this point.

At a fundamental level, there are subtle but profound differences in the $\mu
$NEQT approach and the Langevin approach. It is important to draw attention to
them before closing, which we list below.

\begin{enumerate}
\item $\mathbf{F}_{k,\text{BP}}$ and $d_{\text{i}}W_{k,\text{BP}}$ are
uniquely determined by $\mathfrak{m}_{k}$, deterministic, and independent of
$p_{k}$. Because of the presence of $\boldsymbol{\xi}$ in the Langevin
approach, $\mathbf{F}_{k,\text{BP}}^{\prime}$ in Eq. (\ref{LangevinEquation0})
and the microwork done by it are random quantities.

\item The trajectory from Eq. (\ref{NewLangevinEq}), being deterministic,
requires integration using basic calculus. The trajectory from the Langevin
equation, being stochastic, requires technical concepts of the Wiener
processes (the It\^{o} and Stratonovich\ integrals), which are not as easy as
the basic calculus.

\item As the sign of $\gamma_{k}$ is not fixed, it cannot be taken out of
averaging in Eq. (\ref{Av-V-Sq}) in the $\mu$NEQT. Doing so in the Langevin
limit gives an unphysical result showing that the fluctuating sign is crucial
for correct physics.

\item The MNEQT approach provides a thermodynamic justification for the
frictional drag for small relative velocities. In the Langevin approach, it
appears phenomenologically. The $\mu$NEQT further unravels the mystery behind
the microforce as noted above.
\end{enumerate}

Thus, we hope that the $\mu$NEQT presented here will prove useful to study
both passive and active BPs, and NEQ\ BP in general.

\end{document}